\newcolumntype{?}{!{\vrule width 2pt}}	
\newcommand{\R}{\mathbb{R}}		 
\newcommand{\C}{\mathbb{C}}		 
\newcommand{\transp}{\mathsf{T}} 
\newcommand*{\QEDS}{\hfill\ensuremath{\square}}
\newtheorem{defn}{Definition}
\newtheorem{thm}{Theorem}
\newtheorem{cor}{Corollary}
\newtheorem{prop}{Proposition}
\newtheorem{lem}{Lemma}
\theoremstyle{definition}
\newtheorem{assum}{Assumption}
\newtheorem{rem}{Remark}
\newcommand{\rev}[1]{#1}
\begin{document}

\title{\textbf{\LARGE Identifiability of Differential-Algebraic Systems}}

\author[1]{Arthur N. Montanari$^{*,}$} 
\author[2]{François Lamoline} 
\author[3]{Robert Bereza} 
\author[2,4]{Jorge Gonçalves}

\affil[1]{\normalsize Department of Physics and Astronomy, Northwestern University, Evanston, IL 60208, USA} 
\affil[2]{Luxembourg Centre for Systems Biomedicine, University of Luxembourg, Belvaux  L-4367,  Luxembourg}
\affil[3]{Division of Decision and Control Systems, KTH Royal Institute of Technology, Stockholm SE-100 44, Sweden}
\affil[4]{Department of Plant Sciences, Cambridge University, Cambridge CB2 3EA, United Kingdom}

\twocolumn[
  \begin{@twocolumnfalse}
      \maketitle
Data-driven modeling of dynamical systems often faces numerous data-related challenges. A fundamental requirement is  the existence of a unique set of parameters for a chosen model structure, an issue commonly referred to as identifiability.
Although this problem is well studied for ordinary differential equations (ODEs), few studies have focused on the more general class of systems described by differential-algebraic equations (DAEs).  Examples of DAEs include dynamical systems with algebraic equations representing conservation laws or approximating fast dynamics.
This work introduces a novel identifiability test for models characterized by nonlinear DAEs. Unlike previous approaches, our test only requires prior knowledge of the system equations and does not need nonlinear transformation, index reduction, or numerical integration of the DAEs.
We employed our identifiability analysis across a diverse range of DAE models, illustrating how system identifiability depends on the choices of sensors, experimental conditions, and model structures.
Given the added challenges involved in identifying DAEs when compared to ODEs, we anticipate that our findings will have broad applicability and contribute significantly to the development and validation of data-driven methods for DAEs and other structure-preserving models.

\smallskip 
\noindent\textbf{Keywords:} identifiability; differential-algebraic equations; descriptor systems; system identification; observability.
  \vspace{1cm}
  \end{@twocolumnfalse}
]

{\let\thefootnote\relax\footnote{{$^{*}$Corresponding author: {arthur.montanari@northwestern.edu} }}}

\setcounter{footnote}{0}

\vspace{-1cm}
\section{\large \large Introduction}
\label{sec.intro}

Differential-algebraic equations (DAEs) are a powerful tool for the modeling, analysis, and control of constrained dynamical systems \cite{KunkelBook,DuanBook2010}. 
In the most general case, DAEs are represented in implicit form $\bm F(\dot{\bm x},\bm x) = 0$, where $\bm x(t)$ is the $n$-dimensional state of the system and the Jacobian matrix $\partial\bm F(\dot{\bm x},\bm x)/\partial \bm x$ can be singular; in particular, when the Jacobian is nonsingular, DAEs can be reduced to ordinary differential equations (ODEs). Due to its singular form, DAE models require the development of novel mathematical and computational tools to handle the constraints imposed in the state-space exploration of the system \cite{Petzold1982}. This is evident for DAEs represented in semi-explict form, where the model is decomposed as a set of ODEs $\dot{\bm x}_1 = \bm f_1 (\bm x_1,\bm x_2)$ and a set of algebraic equations $\bm f_2 (\bm x_1,\bm x_2) = 0$. The latter equations directly capture constraints imposed in a wide range of systems by conservation laws (e.g., chemical processes \cite{Dudley2019sensitivity} and power grids \cite{Nabavi2016structured}), mechanical constraints (e.g., multibody mechanics and robotics \cite{pogorelov1998differential}), and timescale separations (e.g., singularly perturbed systems like biological regulatory networks \cite{del2012contraction}). 
DAE models directly preserve the structure of the underlying physical system\textemdash a feature that is often lost after reductionist methods are applied to obtain equivalent ODE models \cite{DuanBook2010} (e.g., Kron reduction in power grids \cite{dorfler2012kron}). 
To leverage the structure and constraint preservation in DAEs, many methods in control \cite{chang2001decentralized,Kunkel2001analysis,campbell2016solving,nadeem2023wide} and state estimation \cite{Gerdin2006nonlinear,mandela2010recursive,Nugroho2022observers} have been proposed for a wide variety of applications. Not surprisingly, controllability and observability conditions were also established for linear \cite{Campbell1991duality,Hou1999,DuanBook2010} 
and nonlinear \cite{Terreu1997,Sato2014algebraic} DAE models.

Closely related to the problems of state estimation and observability are the problems of parameter estimation and identifiability in system identification. Given the structure-preserving nature of most DAE models, data-driven methods for system identification of DAEs could be proven very useful for a variety of applications that have been restricted to ODE models thus far; for instance, to uncover structural relations between molecules in regulatory networks \cite{Yuan2011robust,mangan2016inferring,Aalto2020} or interacting components in cyber-physical systems \cite{Yuan19}. Notwithstanding, the identification of DAEs is a severely underdeveloped field, especially when compared to the identification of ODEs \cite{schon2011system,aguirre2019bird,schoukens2019nonlinear}. The majority of methods developed for parameter estimation in DAEs are application-specific (cf. applications in metabolic models \cite{Dudley2019sensitivity} and power grids \cite{Nabavi2016structured}), with only \textit{two} methods available in the literature for general classes of 
linear \cite{gerdin2007parameter} and nonlinear \cite{Abdalmoaty2021,Bereza2022} DAE models. 
However, even for the simple low-dimensional DAE model of a pendulum (which involves only three parameters), the estimation of certain combinations of parameters using the above tools can be unfeasible (as shown in this paper).

The practical limitations in DAE identification lead to the following \textit{identifiability} questions: How to determine which parameters can be estimated from data in DAE models? What data (with respect to measured variable, time-series length, or excitability conditions) should be available for system identification? While there are many analytical \cite{bellman1970structural,Walter1976identifiability,tunali1987new,Gonccalves2008necessary,Villaverde2016structural}, simulation-based \cite{Raue09,stigter2015fast,joubert2021assessing}, and data-driven \cite{evangelou2022parameter} identifiability tests for ODE systems, just a few studies have focused on DAE systems \cite{gerdin2006identification,ben2006identifiability,gerdin2006using,ljung1994global,ashyraliyev2009systems}. Moreover, identifiability tests for DAEs have either involved several nonlinear system transformations and/or index reductions \cite{gerdin2006identification,ljung1994global,chen2013new}, been restricted to linearized models \cite{ben2006identifiability}, or required extensive numerical integration of the system for many initial conditions \cite{gerdin2006using,ashyraliyev2009systems}. As a result, such tests are often time-consuming, cumbersome, or infeasible to computationally implement and test for many real-world applications.


This paper presents an identifiability test for nonlinear DAE systems that only requires prior knowledge of the model structure, measurement functions, and their successive derivatives (Section \ref{sec.idft}). 
The theoretical results are based on a generalization of rank-based conditions for the observability analysis of DAE models \cite{Terreu1997} (reviewed in Section \ref{sec.obsvtest}). As a side contribution, we rigorously show how the observability test for nonlinear DAE models reduces to well-known observability tests for nonlinear ODEs as well as linear DAEs and ODEs\textemdash formalizing the relation between previous work in the literature (Section~\ref{sec.obsvrelation}). Analogously, our identifiability test can also be employed for systems described by linear DAEs and (non)linear ODEs. This allows us to establish some special conditions for the identifiability of linear DAEs based on the structure of the system matrices (Section \ref{sec.linear}).
%
The proposed identifiability analysis is demonstrated in several applications (Section \ref{sec.exmp}). These examples discuss the dependence of the system identifiability on the placement of sensors, excitation regions, and system structure. In particular, we show that, even for linear DAEs with full-state measurement, the identification process can be challenging and reliant on the system structure.

\medskip\noindent
\textbf{Notation.} 
The following terminology and notation are adopted throughout the paper. 
The term \textit{nonsingular systems} is used to refer to dynamical systems described by ODEs [model \eqref{eq.nonsingularsys}] and the term \textit{descriptor systems} to singular systems described by DAEs [implicit \eqref{eq.implicitdescriptor} and semi-explicit \eqref{eq.DAEsemiexplicit} models]. 
%
%
Let $I_n$ denote the identity matrix of size $n$ and $0_{m\times n}$ denote an $m\times n$ null matrix (subscripts are omitted when self-evident). 
$\operatorname{col}(X)$ denotes the column space of $X$.
The vectorization of an $m\times n$ matrix $X$ is performed column-wise, as defined by $\operatorname{vec}(X) = [X_{11} \,\, X_{21} \,\, \ldots \,\, X_{m1} \,\, \ldots \,\, X_{1n} \,\, X_{2n} \,\, \ldots \,\, X_{mn}]^\transp$. The vectorization of multiple matrices is denoted by $\operatorname{vec}(X,\ldots,Y) = [\operatorname{vec}(X)^\transp \,\, \ldots \,\, \operatorname{vec}(Y)^\transp]^\transp$. The Kronecker product is denoted by $X\otimes Y$.

Column vectors are represented by bold letters $\bm x$. The vector $\bm y^{(\nu)}:={\rm d}^\nu \bm y(t)/{\rm d}t^\nu$ denotes the $\nu$th derivative of $\bm y$ with respect to time $t$.
Composition maps are denoted as $\bm f\circ \bm h(\bm x) := \bm f(\bm h(\bm x))$.
%
Let $\mathcal L^\nu_{\bm f} h_j(\bm x)$ be the $\nu$-th Lie derivative of the $j$-th component of $\bm h(\bm x)$ along the vector field $\bm f(\bm x)$.
By definition, $\mathcal L^0_{\bm f} h_j(\bm x)\coloneqq h_j(\bm x)$ and 
\begin{equation*}
    \mathcal L^\nu_{\bm f} h_j(\bm x)\coloneqq\pdv{\mathcal L^{\nu-1}_{\bm f} h_j(\bm x)}{\bm x}\cdot \bm f(\bm x).
\label{eq.liederivatives}
\end{equation*}
\noindent
For compactness, let $\bm{\mathcal L}_{\bm f}^\nu \bm h(\bm x) \coloneqq [\mathcal L_{\bm f}^\nu h_1(\bm x) \, \ldots \, \mathcal L_{\bm f}^\nu h_q(\bm x)]^\transp$.

\section{\large Observability of Descriptor Systems}
\label{sec.obsvdae}


Consider the continuous-time autonomous implicit descriptor system
\begin{subequations}
    \begin{align}
        0 &= \bm F(\bm x,\dot{\bm x}), 
        \label{eq.dynamicsimplicit}
        \\
        \bm y &= \bm h(\bm x),
        \label{eq.measurementimplicit}
    \end{align}
\label{eq.implicitdescriptor}%
\end{subequations}%
\noindent
where $\bm x\in\mathcal X\subseteq\R^n$ is the state vector, $\dot{\bm x}\in\bar{\mathcal X}\subseteq\R^{n}$ is the time derivative of the state vector, and $\bm y\in\R^q$ is the output vector (measurement signals). 
Functions $\bm F:\mathcal X\times\bar{\mathcal X} \mapsto \mathcal F\subseteq \R^n$ and $\bm h:\mathcal X\mapsto \mathcal H\subseteq\R^q$ are considered to be sufficiently smooth. 
Here, we assume that the descriptor system \eqref{eq.implicitdescriptor} can be represented in semi-explicit DAE form (see Refs. \cite{Campbell1995,KunkelBook} for sufficient conditions in which this assumption holds):
\begin{subequations}
    \begin{align}
        \dot{\bm x}_1 &= \bm f_1(\bm x_1,\bm x_2), 
        \label{eq.differentialsys}
        \\
        0 &= \bm f_2(\bm x_1,\bm x_2), \label{eq.algebraicsys} 
        \\
        \bm y &= \bm h(\bm x_1,\bm x_2),
        \label{eq.semiexplicit.output}
    \end{align}
\label{eq.DAEsemiexplicit}%
\end{subequations}
\noindent
where $\bm x_1\in\mathcal X_1\subseteq\R^{n_1}$ and $\bm x_2\in\mathcal X_2\subseteq\R^{n_2}$ are the differential and algebraic variables, respectively ($\bm x := [\bm x_1^\transp \,\,\, \bm x_2^\transp]^\transp$). 

Uniqueness of solutions is considered in the context of initial value problems, meaning that solutions are required to satisfy a consistent initial condition $\bm x(t_0)$. An initial condition $\bm x(t_0)$ is said to be consistent if the associated initial value problem has at least one solution \cite{Pantelides1988}. Let us introduce the set of consistent states  
\begin{equation*}
    \mathbb{L}:=\left\lbrace (t,\bm x, \dot{\bm x}) \Big| \dot{\bm x}_1=\bm f_1(\bm x_1,\bm x_2), 0=\bm f_2(\bm x_1,\bm x_2)  \right\rbrace\subseteq\mathcal X.
\end{equation*}
\noindent
From \cite[Theorem 4.13]{KunkelBook}, we deduce the existence and uniqueness of a local solution to \eqref{eq.differentialsys}--\eqref{eq.algebraicsys}:

\begin{lem} \label{lemma.daesolution}
    Assume that $\mathbb{L}\neq \emptyset$ and that $\bm f_1$ and $\bm f_2$ are $C^1$-smooth functions. Then, for every $\bm x(t_0) = [\bm x_1^\transp(t_0) \,\, \bm x_2^\transp(t_0)]^\transp \in \mathbb{L}$, the semi-explicit DAE form \eqref{eq.DAEsemiexplicit} has a unique local solution satisfying the initial value given by $\bm x(t_0)$.
\end{lem}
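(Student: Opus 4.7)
The plan is to view the semi-explicit system \eqref{eq.differentialsys}--\eqref{eq.algebraicsys} as a special instance of the implicit descriptor system \eqref{eq.dynamicsimplicit} and then invoke Theorem~4.13 of \cite{KunkelBook}. First, I would repackage the semi-explicit model by stacking its two blocks into
\begin{equation*}
\bm F(\bm x, \dot{\bm x}) := \begin{bmatrix} \dot{\bm x}_1 - \bm f_1(\bm x_1,\bm x_2) \\ \bm f_2(\bm x_1,\bm x_2) \end{bmatrix},
\end{equation*}
with $\bm x = [\bm x_1^\transp \,\, \bm x_2^\transp]^\transp$, and observe that the Jacobian $\partial\bm F/\partial\dot{\bm x}$ has the block form $\operatorname{diag}(I_{n_1},\,0_{n_2\times n_2})$, so $\bm F(\bm x,\dot{\bm x})=0$ is a genuine implicit DAE whose singularity pattern is exactly that treated by the cited theorem.

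Next, I would verify the hypotheses of Theorem~4.13 on this $\bm F$. The $C^1$ regularity of $\bm f_1$ and $\bm f_2$ is inherited by $\bm F$. The hypothesis $\mathbb{L}\neq\emptyset$ furnishes a consistent initial tuple: for any $\bm x(t_0)\in\mathbb{L}$, the choice $\dot{\bm x}_1(t_0)=\bm f_1(\bm x_1(t_0),\bm x_2(t_0))$ together with the algebraic constraint $\bm f_2(\bm x_1(t_0),\bm x_2(t_0))=0$ yields a triple $(t_0,\bm x(t_0),\dot{\bm x}(t_0))$ satisfying $\bm F(\bm x(t_0),\dot{\bm x}(t_0))=0$, which is precisely the notion of consistency used in \cite{KunkelBook}. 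With smoothness and consistency in hand, the cited theorem delivers a unique local solution $\bm x(t)$ on some neighborhood of $t_0$, which unpacks component-wise into the desired pair $(\bm x_1(t),\bm x_2(t))$ solving \eqref{eq.differentialsys}--\eqref{eq.algebraicsys}.

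The main obstacle I anticipate is not the packaging itself but the verification of the implicit regularity conditions built into Theorem~4.13, namely that the strangeness (or, equivalently, differentiation) index of $\bm F$ is well defined and the associated constant-rank conditions are met along the trajectory. These are what guarantee that the algebraic constraint $\bm f_2=0$ is \emph{propagated} in time rather than merely imposed at $t_0$, and they are precisely the reason the preceding text of the paper has to invoke \cite{Campbell1995,KunkelBook} to justify the passage from the implicit form \eqref{eq.implicitdescriptor} to the semi-explicit form \eqref{eq.DAEsemiexplicit}. Once that regularity is accepted as part of the standing hypothesis that the system admits a semi-explicit representation, the remainder of the argument reduces to the bookkeeping described in the previous paragraph.
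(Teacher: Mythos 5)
Your proposal takes essentially the same route as the paper: the paper offers no argument beyond the remark that the lemma is ``deduced from [Theorem 4.13]'' of Kunkel and Mehrmann, which is exactly the reduction-to-implicit-form-plus-citation that you carry out. If anything, you are more careful than the paper in flagging that the constant-rank/index hypotheses of that theorem are not automatic consequences of $C^1$-smoothness and $\mathbb{L}\neq\emptyset$ and must be absorbed into the standing assumption that a semi-explicit representation exists.
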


Considering only consistent states  $\bm x\in\mathbb L$, let ${\bm\Phi}_T(\bm x(t_0);\bm\theta)$ be the unique solution of system \eqref{eq.DAEsemiexplicit}:
\begin{equation}
    \begin{aligned}
        {\bm\Phi}_T(\bm x(t_0);\bm\theta) &:= \bm x(t_0 + T) 
        \\
        &= \bm x_1(t_0) + \int_{t_0}^{t_0+T} \bm f_1(\bm x(t);\bm\theta){\rm d}t
        \\
        &\quad\,\,\, \textrm{s.t.} \quad 0 = \bm f_2(\bm x_1,\bm x_2;\bm\theta),
        \end{aligned}
    \label{eq.daesolution}
\end{equation}
\noindent
We explicitly denote in ${\bm\Phi}_T(\bm x(t_0);\bm\theta)$ the internal parameters $\bm\theta$ of $\bm f_1$ and $\bm f_2$ (and, by equivalence, of the implicit function $\bm F$), which are relevant for the identifiability analysis. 


We now formalize the local observability property for descriptor systems as a generalization of the local (weak) observability of nonsingular dynamical systems \cite{Hermann1977,VidyasagarBook}.

\begin{defn} \label{def.obsvdae}
	The descriptor system \eqref{eq.implicitdescriptor}, or the pair $\{\bm F,\bm h\}$, is \textit{locally observable at} $\bm x_0$ if there exists a neighborhood $\mathcal U\subseteq\mathbb L$ of $\bm x_0$ such that, for every state $(\bm x'\in\mathcal U) \neq\bm x_0$, $\bm h\circ {\bm\Phi}_T(\bm x_0;\bm\theta)\neq\bm h\circ{{\bm\Phi}}_T(\bm x';\bm\theta)$ for some finite time interval $t\in [t_0,t_0+T]$. 
\end{defn}

Assuming prior knowledge of the system model \eqref{eq.implicitdescriptor}, a descriptor system is locally observable around a state $\bm x_0$ if $\bm x_0$ can be uniquely reconstructed from the measured signal $\bm y(t)$ over a finite trajectory, as defined by the composition map $\bm h\circ{\bm\Phi}_T(\bm x(t_0);\bm\theta)$.  The local weak observability property is related to the notion of \textit{smooth observability}, implying that there exists a smooth map from $\bm y$ and its successive derivatives $\dot{\bm y},\ddot{\bm y}, \ldots, \bm y^{(\nu)}$ to the state vector $\bm x$ if the system is smoothly observable \cite{Campbell1991,Terreu1997}.

\begin{rem}
    The observability property considered here concerns only the reconstruction of consistent states $\bm x\in\mathbb L$. For the reconstruction of inconsistent states that do not satisfy the constraints in solution \eqref{eq.daesolution} (as described by generalized solutions with impulse terms \cite{Chen2022}), we refer to the notion of impulse observability  \cite{Hou1999,DuanBook2010}.
\label{rem.inconsistentstates}
\end{rem}

\subsection{Observability test}
\label{sec.obsvtest}

The local observability of a nonlinear descriptor system can be verified through an algebraic rank condition \cite{Campbell1995,Terreu1997}, analogous to Kalman's famous observability test. To this end, we first define the notion of 1-fullness for a linear set of equations:
\begin{equation}
    \underbrace{
        \begin{bmatrix}
            M_{11} & M_{12} \\ M_{21} & M_{22}
        \end{bmatrix}
        }_{M}
    \underbrace{
        \begin{bmatrix}
            \bm v_1 \\ \bm v_2
        \end{bmatrix}
        }_{\bm v}
        =
    \underbrace{
        \begin{bmatrix}
            \bm b_1 \\ \bm b_2
        \end{bmatrix}
        }_{\bm b},
    \label{eq.1fullproblem}
\end{equation}
\noindent
where $\bm v = [\bm v_1^\transp \,\, \bm v_2^\transp]^\transp\in\R^m$, $\bm v_1\in\R^{m_1}$, and $\bm v_2\in\R^{m_2}$. Let us define $M_1 = [M_{11}^\transp \,\, M_{21}^\transp]^\transp$ and \rev{$M_2 = [M_{12}^\transp \,\, M_{22}^\transp]^\transp$}.

\begin{defn} \textnormal{\cite{Campbell1995}}
    The linear equation $M\bm v = \bm b$ is \textit{1-full with respect to $\bm v_1$} if Eq.~\eqref{eq.1fullproblem} uniquely determines $\bm v_1$ for any consistent vector $\bm b$ and known matrix $M$.
\end{defn}
 
\begin{lem} \label{lemma.1full}
 \textnormal{\cite{Campbell1995}}
    The following statements are equivalent:

    \begin{enumerate}
        \item the linear equation \eqref{eq.1fullproblem} is 1-full with respect to $\bm v_1$;
        \item $\rank(M_1) = m_1$ and $\operatorname{col}(M_1)\cap\operatorname{col}(M_2) = \emptyset$;
        \item $\rank(M) = m_1 + \rank(M_2)$.
    \end{enumerate}
\end{lem}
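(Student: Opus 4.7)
The plan is to establish the three-way equivalence by proving $(1)\Leftrightarrow(2)$ and $(2)\Leftrightarrow(3)$, which together give $(1)\Leftrightarrow(3)$. Throughout, I will interpret the intersection condition $\operatorname{col}(M_1)\cap\operatorname{col}(M_2) = \emptyset$ as meaning the trivial intersection $\{0\}$, since $0$ lies in every column space.

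For $(1)\Leftrightarrow(2)$, I would first translate 1-fullness into a statement about the null space of $M$. Since $\bm v_1$ is uniquely determined for every consistent $\bm b$ if and only if the difference of any two solutions has zero first block, 1-fullness is equivalent to: whenever $M\bm w = M_1\bm w_1 + M_2\bm w_2 = 0$, one must have $\bm w_1 = 0$. I would then argue both directions. Assuming (1), choosing $\bm w_2 = 0$ shows $M_1\bm w_1 = 0 \Rightarrow \bm w_1=0$, so $M_1$ has full column rank $m_1$; and if $\bm z \in \operatorname{col}(M_1)\cap\operatorname{col}(M_2)$, write $\bm z = M_1\bm w_1 = -M_2\bm w_2$, yielding $M\bm w = 0$ and thus $\bm w_1 = 0$ and $\bm z = 0$. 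Assuming (2), any $M_1\bm w_1 + M_2\bm w_2 = 0$ places $M_1\bm w_1 \in \operatorname{col}(M_1)\cap\operatorname{col}(M_2) = \{0\}$, so $M_1\bm w_1 = 0$ and the injectivity of $M_1$ forces $\bm w_1 = 0$.

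For $(2)\Leftrightarrow(3)$, I would use the standard dimension identity
\begin{equation*}
\rank(M) = \dim\bigl(\operatorname{col}(M_1) + \operatorname{col}(M_2)\bigr) = \rank(M_1) + \rank(M_2) - \dim\bigl(\operatorname{col}(M_1)\cap\operatorname{col}(M_2)\bigr),
\end{equation*}
which follows from viewing $\operatorname{col}(M)$ as the sum of the two column subspaces. Under (2), the intersection term vanishes and $\rank(M_1) = m_1$, giving $\rank(M) = m_1 + \rank(M_2)$. Conversely, the inequality $\rank(M) \leq \rank(M_1) + \rank(M_2) \leq m_1 + \rank(M_2)$ attains equality in (3) only when both $\rank(M_1) = m_1$ and $\dim(\operatorname{col}(M_1)\cap\operatorname{col}(M_2)) = 0$, recovering (2).

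I do not anticipate a serious technical obstacle here: the argument reduces to elementary linear algebra once 1-fullness is reformulated as a null-space property. The only point requiring mild care is the notational convention that the ``empty'' intersection in (2) is understood as the trivial subspace $\{0\}$, and ensuring that the dimension identity is applied to the correct pair of subspaces (the column spaces of $M_1$ and $M_2$, whose sum is $\operatorname{col}(M)$). With those clarifications, the six implications needed to close the cycle are each one or two lines.
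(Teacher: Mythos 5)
Your proof is correct. Note that the paper itself gives no proof of Lemma \ref{lemma.1full}; it is quoted from Campbell and Griepentrog \cite{Campbell1995}, so there is no in-paper argument to compare against. Your reduction of 1-fullness to the null-space condition ($M_1\bm w_1 + M_2\bm w_2 = 0 \Rightarrow \bm w_1 = 0$) and the use of the dimension identity $\rank(M) = \rank(M_1) + \rank(M_2) - \dim(\operatorname{col}(M_1)\cap\operatorname{col}(M_2))$ are exactly the standard route, and your reading of the condition $\operatorname{col}(M_1)\cap\operatorname{col}(M_2) = \emptyset$ as trivial intersection $\{0\}$ is the intended one (the literal set intersection of two subspaces always contains $0$). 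All six implications check out.
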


From Lemma~\ref{lemma.1full}, a local observability condition for nonlinear descriptor systems can be derived  as follows \cite{Terreu1997}.
First, let us subsequently differentiate model~\eqref{eq.implicitdescriptor}:
\begin{equation}
\label{eq.subsequentdifferentiation}%
    \underbrace{
    \begin{aligned}
        0 &= \bm F(\bm x,\dot{\bm x}), \\
        0 &= {\rm d} \bm F(\bm x,\dot{\bm x})/{\rm d}t, \\
        0 &= {\rm d}^2 \bm F(\bm x,\dot{\bm x})/{\rm d}t^2, \\
        &\vdots \\
        0 &= {\rm d}^\mu \bm F(\bm x,\dot{\bm x})/{\rm d}t^\mu,
    \end{aligned}}_{0=\bar{\bm F}_{\mu}(\bm x,\dot{\bm x},\ddot{\bm x},\ldots,\bm x^{(\mu+1)})}
    \quad\quad
    \underbrace{
    \begin{aligned}
    \bm y &= \bm h(\bm x), \\
    \dot{\bm y} &= {\rm d} \bm h(\bm x)/{\rm d}t, \\
    \ddot{\bm y} &= {\rm d}^2 \bm h(\bm x)/{\rm d}t^2, \\
    &\vdots \\
    \bm y^{(\nu)} &= {\rm d}^\nu \bm h(\bm x)/{\rm d}t^\nu.
    \end{aligned}}_{\bar{\bm y} = \bar{\bm H}_{\nu}(\bm x,\dot{\bm x},\ddot{\bm x},\ldots,\bm x^{(\nu)})}
\end{equation}
\noindent
where $\mu\geq 0$ and $\nu\geq 0$ define respectively the number of times $\bm F$ and $\bm h$ are differentiated. 
According to Definition~\ref{def.obsvdae}, the descriptor system \eqref{eq.implicitdescriptor} is locally observable at $\bm x_0$ if there exists a local injective map from the system model, the output signals, and its derivatives up to some order [i.e., Eqs.~\eqref{eq.subsequentdifferentiation}] to the system state $\bm x_0$. Considering the Jacobian matrix of Eqs.~\eqref{eq.subsequentdifferentiation} locally evaluated at $\bm x(t) = \bm x_0$, we have the following local linear map:
\begin{equation}
    \underbrace{
    \begin{bmatrix}
            \dfrac{\partial \bar{\bm F}_{\mu}(\bm x,\bm w)}{\partial \bm x} && \dfrac{\partial\bar{\bm F}_{\mu}(\bm x,\bm w)}{\partial\bm w} \\
            \dfrac{\partial\bar{\bm H}_{\nu}(\bm x,\bm w)}{\partial \bm x} && \dfrac{\partial\bar{\bm H}_{\nu}(\bm x,\bm w)}{\partial\bm w} \\
        \end{bmatrix}
        }_{\mathcal O(\bm x,\bm w)}
        \begin{bmatrix}
            \bm x \\
            \bm w
        \end{bmatrix}
        =
        \begin{bmatrix}
            0_{n\times 1} \\
            \bar{\bm y}
        \end{bmatrix}
\label{eq.daeobsvmatrix}
\end{equation}
\noindent
where $\bm w = [\dot{\bm x}^\transp \,\, \ddot{\bm x}^\transp \,\, \ldots \,\, \bm x^{(\sigma) \, \transp}]^\transp$, with order $\sigma = \max\{\mu+1,\nu\}$, and $\mathcal O(\bm x,\bm w)$ is the observability matrix. Since $\bm F$ and $\bm h$ are smooth functions, following the Constant Rank Theorem, an algebraic observability condition can thus be derived for the unique reconstruction of $\bm x_0$ by testing under which conditions the linear set of equations \eqref{eq.daeobsvmatrix} is 1-full with respect to $
\bm x$:

\begin{thm} \label{theor.nonlinearobsv} {\normalfont \textbf{(Observability of nonlinear DAEs)}}~\textnormal{\cite{Terreu1997}} 
	The descriptor system \eqref{eq.implicitdescriptor}, or the pair $\{\bm F, \bm h\}$, is locally observable at $\bm x_0\in\mathbb L$ if
    \begin{equation}
        \operatorname{rank}
        (\mathcal O(\bm x,\bm w))
        = n + \operatorname{rank}
        \begin{bmatrix}
            \dfrac{\partial\bar{\bm F}_{\mu}(\bm x,\bm w)}{\partial\bm w} \\
            \dfrac{\partial\bar{\bm H}_{\nu}(\bm x,\bm w)}{\partial\bm w} \\
        \end{bmatrix}.
    \label{eq.daeobsv}
    \end{equation}
	\noindent
	holds at $\bm x = \bm x_0$ for some $\mu$ and $\nu$.

\end{thm}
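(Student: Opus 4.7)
The plan is to recast local observability of the descriptor system as a local injectivity statement for the prolonged map defined in \eqref{eq.subsequentdifferentiation}, and then to reduce that injectivity statement to the 1-fullness criterion of Lemma~\ref{lemma.1full}. By Definition~\ref{def.obsvdae}, the pair $\{\bm F,\bm h\}$ is locally observable at $\bm x_0$ if the map $\bm x\mapsto \bm h\circ \bm\Phi_T(\bm x;\bm\theta)$, restricted to a neighborhood $\mathcal U\subseteq \mathbb L$ of $\bm x_0$, is injective. Since $\bm F$ and $\bm h$ are smooth and, by Lemma~\ref{lemma.daesolution}, $\bm\Phi_T$ is well-defined on $\mathbb L$, this trajectory is analytic in $t$ at any smoothness order we need, so its Taylor coefficients $\bm y(t_0),\dot{\bm y}(t_0),\ldots,\bm y^{(\nu)}(t_0)$ coincide with $\bar{\bm H}_{\nu}(\bm x_0,\bm w_0)$, where $\bm w_0$ collects the higher derivatives $\dot{\bm x}_0,\ldots,\bm x_0^{(\sigma)}$. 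Conversely, the top block $0=\bar{\bm F}_{\mu}(\bm x,\bm w)$ of \eqref{eq.subsequentdifferentiation} characterises $(\bm x,\bm w)$ lying on the consistent jet manifold. Hence, local observability at $\bm x_0$ is equivalent to the nonlinear map $(\bm x,\bm w)\mapsto (\bar{\bm F}_{\mu},\bar{\bm H}_{\nu})$ locally determining $\bm x$ uniquely from its image $(0,\bar{\bm y})$.

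Next, I would bring in the Constant Rank Theorem. Since $\mathcal O(\bm x,\bm w)$ is, by construction, the Jacobian of $(\bar{\bm F}_{\mu},\bar{\bm H}_{\nu})$ with respect to $(\bm x,\bm w)$, and since we may restrict to a neighborhood of $(\bm x_0,\bm w_0)$ where $\mathcal O$ has locally constant rank, the nonlinear map is locally conjugate to its linearization. Therefore local injectivity in the $\bm x$-component is equivalent to the linearised system \eqref{eq.daeobsvmatrix} being uniquely solvable for its $\bm x$-block given any consistent right-hand side $[0^\transp\;\bar{\bm y}^\transp]^\transp$; in the language of Definition~3, this is precisely the property that \eqref{eq.daeobsvmatrix} is 1-full with respect to $\bm x$.

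The final step is to read off the algebraic condition. Identifying $\bm v_1\leftrightarrow \bm x$, $\bm v_2\leftrightarrow \bm w$, and
\begin{equation*}
M_2=\begin{bmatrix} \partial\bar{\bm F}_{\mu}/\partial\bm w \\ \partial\bar{\bm H}_{\nu}/\partial\bm w\end{bmatrix},
\end{equation*}
Lemma~\ref{lemma.1full}(3) translates 1-fullness with respect to $\bm x$ into the rank equality
\begin{equation*}
\operatorname{rank}\,\mathcal O(\bm x,\bm w)=n+\operatorname{rank}\,M_2,
\end{equation*}
which is exactly \eqref{eq.daeobsv}. Thus, whenever the rank condition holds for some choice of differentiation orders $\mu$ and $\nu$, local observability at $\bm x_0$ follows, completing the proof.

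The main obstacle I anticipate is the rigorous passage from injectivity of the augmented map in $(\bm x,\bm w)$ to injectivity in $\bm x$ alone. The auxiliary jet variables $\bm w$ are not part of the state, so one must argue that, along the unique DAE trajectory emanating from each consistent $\bm x_0$, the corresponding $\bm w$ is pinned down; 1-fullness, combined with the Constant Rank Theorem on the smooth submanifold $\mathbb L$, is precisely the technical tool that isolates the $\bm x$-block and makes the reduction work. The remaining bookkeeping (choice of $\mu,\nu$ large enough to attain the maximal achievable rank, smoothness requirements on $\bm F,\bm h$) is routine once this step is in place.
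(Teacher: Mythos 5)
Your proposal matches the paper's own route: the paper derives this theorem (attributed to Terrell) exactly by prolonging the system via successive differentiation, linearizing at $\bm x_0$ with the Constant Rank Theorem, and invoking the 1-fullness criterion of Lemma~\ref{lemma.1full} with $\bm v_1\leftrightarrow\bm x$ and $\bm v_2\leftrightarrow\bm w$, which is also the template of its detailed proof of the analogous identifiability result. The argument is correct; only note that the theorem claims sufficiency, so the "equivalence" phrasing is stronger than what is needed or established.
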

%


\begin{rem}
\label{rem.differentiation}
    The rank condition~\eqref{eq.daeobsv} depends on the number of times that functions $\bm F$ and $\bm h$ are differentiated in order to construct the observability matrix $\mathcal O(\bm x,\bm w)$. In practice, we iteratively increment $\mu$ and $\nu$ until the image space of $\mathcal O$ stops growing for some $\sigma$, i.e., until $\rank(\mathcal O(\bm x,\dot{\bm x},\ldots,\bm x^{(\sigma)})) = \rank(\mathcal O(\bm x,\dot{\bm x},\ldots,\bm x^{(\sigma+1)}))$.
\end{rem}

\begin{rem}
    Although here we focus on autonomous systems \eqref{eq.implicitdescriptor}, the results can be directly generalized to control systems $\bm F(\bm x,\dot{\bm x}) = B\bm u$ or $\bm F(\bm x,\dot{\bm x},\bm u) = 0$, where $\bm u$ is the input vector (control signal).
\label{rem.feedbacksys}
\end{rem}


\subsection{Relation to special cases}
\label{sec.obsvrelation}

The observability test for nonlinear descriptor systems proposed in Ref. \cite{Terreu1997} was developed independently from other well-known observability tests for linear descriptor systems 
\cite{Hou1999} and (linear and nonlinear) nonsingular systems \cite{Kalman1959,Hermann1977}. As such, a clear relationship between these tests is not yet available in the literature. As a first contribution of our paper, we derive and formalize the relationship between the aforementioned observability tests, showing that Theorem~\ref{theor.nonlinearobsv} is  indeed a rigorous generalization of observability properties presented in previous work. These relationships are summarized in Fig.~\ref{fig.generalization}.


\medskip\noindent
\textbf{Nonsingular systems.}
The differential index of a system of DAEs is the number of times the DAEs have to be differentiated in order to be expressed as an equivalent system of ODEs, also known as nonsingular systems. Thus, the well-studied class of nonsingular systems described by
\begin{subequations}
    \begin{align}
        \dot{\bm x} &= \bm f(\bm x), \\
        \bm y &= \bm h(\bm x),
    \end{align}
\label{eq.nonsingularsys}%
\end{subequations}%
\noindent
can be seen as a special form of descriptor systems \eqref{eq.implicitdescriptor} with index 0, where $\bm F(\bm x,\dot{\bm x}) = \bm f(\bm x) - \dot{\bm x}$. 
%
%
%
An observability condition for nonsingular systems can thus be derived from Theorem~\ref{theor.nonlinearobsv}:

\begin{prop} {\normalfont \textbf{(Observability of nonlinear ODEs)}}
\label{corol.obsvnonsingular}
    The nonsingular system \eqref{eq.nonsingularsys}, or the pair $\{\bm f,\bm h\}$ is locally observable at $\bm x_0$ if the following equivalent conditions are satisfied:
    \begin{enumerate}
        \item $\mathcal O(\bm x,\bm w)$ is nonsingular at $\bm x=\bm x_0$ for some $\mu$ and $\nu$; 
        
        \item the Jacobian matrix
        \begin{equation*}
            \bm \Psi(\bm x) = \pdv{}{\bm x}
            \begin{bmatrix}
            \mathcal L_{\bm f}^0\bm h(\bm x) \\ \mathcal L_{\bm f}^1\bm h(\bm x) \\ \vdots \\  \mathcal L_{\bm f}^\nu\bm h(\bm x)
        \end{bmatrix} \label{eq.obsvmatrix.liederivative}
        \end{equation*}
        is full-column rank at $\bm x=\bm x_0$ for some $\nu$.   
    \end{enumerate}
\end{prop}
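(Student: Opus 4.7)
The plan is to derive Proposition~\ref{corol.obsvnonsingular} as a direct specialization of Theorem~\ref{theor.nonlinearobsv} to the case $\bm F(\bm x,\dot{\bm x}) = \bm f(\bm x) - \dot{\bm x}$, and then to reduce the resulting rank condition to one expressed purely in terms of the Lie derivatives of $\bm h$ along $\bm f$.

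First I would compute $\partial\bar{\bm F}_\mu(\bm x,\bm w)/\partial\bm w$ explicitly for this special $\bm F$. Because $\partial\bm F/\partial\dot{\bm x}=-I_n$, the successive time derivative ${\rm d}^k\bm F/{\rm d}t^k$ satisfies $\partial/\partial\bm x^{(k+1)}=-I_n$ and has only algebraic dependence on $\bm x^{(j)}$ for $j\le k$. Hence $\partial\bar{\bm F}_\mu/\partial\bm w$ has a block lower-triangular structure with $-I_n$ blocks on the diagonal. Fixing $\mu=\nu-1$ so that $\sigma=\nu$, this submatrix is $n\nu\times n\nu$ and nonsingular. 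Substituting this into the rank condition \eqref{eq.daeobsv} collapses it to $\operatorname{rank}(\mathcal O(\bm x,\bm w)) = n + n\nu$, which is the total number of columns of $\mathcal O$; i.e., $\mathcal O$ has full column rank at $\bm x=\bm x_0$, establishing item~(1).

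For the equivalence (1)$\Leftrightarrow$(2), I would solve the constraints $\bar{\bm F}_\mu=0$ recursively for $\bm w$ as a smooth function of $\bm x$: $\dot{\bm x}=\bm f(\bm x)$, $\ddot{\bm x}=(\partial\bm f/\partial\bm x)\bm f(\bm x)$, and inductively $\bm x^{(k+1)}=\bm\psi_k(\bm x)$ for smooth $\bm\psi_k$ built from the partial derivatives of $\bm f$. Substituting into $\bar{\bm H}_\nu$ and applying the chain rule, a straightforward induction on $k$ using the definition of the Lie derivative in \eqref{eq.liederivatives} yields $\bm y^{(k)}=\mathcal L_{\bm f}^k\bm h(\bm x)$. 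At the level of the linear system \eqref{eq.daeobsvmatrix}, the nonsingular $-I_n$ block-diagonal structure of $\partial\bar{\bm F}_\mu/\partial\bm w$ admits a Schur-complement elimination of the $\bm w$ columns, and the resulting reduced matrix is exactly $\bm\Psi(\bm x)$. Consequently, $\mathcal O(\bm x_0,\bm w)$ has full column rank if and only if $\bm\Psi(\bm x_0)$ does.

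Local observability then follows from the Constant Rank Theorem: full column rank of $\bm\Psi$ at $\bm x_0$ yields local injectivity of the smooth map $\bm x\mapsto(\mathcal L_{\bm f}^0\bm h,\ldots,\mathcal L_{\bm f}^\nu\bm h)(\bm x)$, so distinct initial conditions in a neighborhood of $\bm x_0$ produce distinct output trajectories, as required by Definition~\ref{def.obsvdae}. The main technical obstacle, in my view, is the Schur-complement bookkeeping that identifies the reduced observability matrix with $\bm\Psi(\bm x)$, and equivalently the inductive verification that substituting $\bm x^{(k+1)}=\bm\psi_k(\bm x)$ into ${\rm d}^k\bm h/{\rm d}t^k$ reproduces the Lie-derivative recursion in \eqref{eq.liederivatives}.
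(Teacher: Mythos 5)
Your proposal is correct and follows essentially the same route as the paper's proof: specialize Theorem~\ref{theor.nonlinearobsv} to $\bm F=\bm f(\bm x)-\dot{\bm x}$, observe that $\partial\bar{\bm F}_\mu/\partial\bm w$ is block lower triangular with $-I_n$ diagonal blocks so the rank condition collapses to full column rank of $\mathcal O$, and then eliminate the $\bm w$ columns by a Schur complement that is identified with $\bm\Psi(\bm x)$ via the substitutions $\dot{\bm x}=\bm f(\bm x)$, $\ddot{\bm x}=(\partial\bm f/\partial\bm x)\bm f(\bm x)$, etc., and induction on the Lie-derivative recursion. The only cosmetic difference is that you spell out the final Constant-Rank-Theorem/injectivity step, which the paper delegates to the cited references for condition~2.
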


\noindent
    \textit{Proof sketch.} 
    Condition 2 is a well-known condition in the literature for the observability of nonlinear nonsingular systems; proofs can be found in Refs. \cite{Hermann1977,VidyasagarBook,Montanari2020}. Here, we show that condition 1 follows directly from Theorem~\ref{theor.nonlinearobsv} for $\bm F(\bm x,\dot{\bm x}) = \bm f(\bm x) - \dot{\bm x}$. Furthermore, we prove that conditions 1 and 2 are equivalent. See Appendix \ref{app.proofs} for details.  \QEDS

\begin{figure}
    \centering
    \includegraphics[width=\columnwidth]{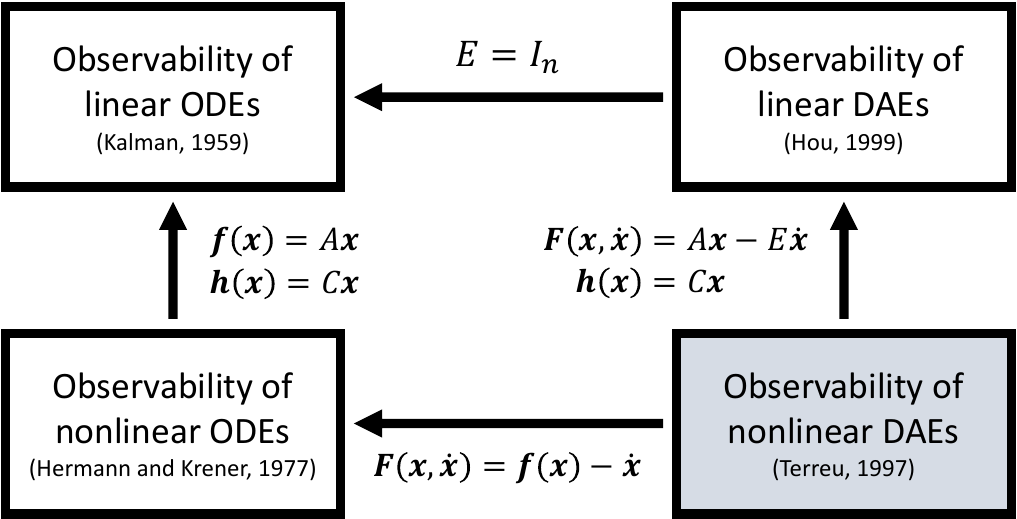}
    \caption{Observability of descriptor systems and its special cases.}
    \label{fig.generalization}
\end{figure}

    Another important class of descriptor systems are represented by semi-explicit DAEs \eqref{eq.DAEsemiexplicit} with index 1~\cite{Winkler2004}, meaning that differentiating the constraints \eqref{eq.algebraicsys} once leads to an ODE.
    Since $\bm f_2$ is smooth, an index-1 DAE implies that Eq. \eqref{eq.algebraicsys} is locally solvable for the algebraic variable $\bm x_2$. Following the Implicit Function Theorem, if $\rank(\partial \bm f_2(\bm x_1,\bm x_2)/\partial \bm x_2) = n_2$, then there exists a locally unique solution $\bm x_2 = \bm F_2(\bm x_1)$ which inserted into Eqs. \eqref{eq.differentialsys} and \eqref{eq.semiexplicit.output} yields
    \begin{subequations} \label{eq.index1DAE}
    \begin{align}
        \dot{\bm x}_1 &= \bm f_1(\bm x_1,\bm F_2(\bm x_1)) := \bm F_1(\bm x_1),
        \\
        \bm y &= \bm h(\bm x_1,\bm F_2(\bm x_1)) \,\, := \bm H(\bm x_1).
    \end{align}
    \end{subequations}
    \noindent
    The local observability of an index-1 descriptor system can, therefore, be verified using Proposition~\ref{corol.obsvnonsingular} for the pair $\{\bm F_1,\bm H\}$ at any state $\bm x_0\in\mathbb L$. From Definition~\ref{def.obsvdae}, if $\{\bm F_1,\bm H\}$ is locally observable, then $\bm x_1$ can be locally inferred from the measurement signal $\bm y$ and hence $\bm x_2$ can be obtained from its local solution $\bm F_2(\bm x_1)$. However, applying Proposition \ref{corol.obsvnonsingular} requires the construction of the local map $\bm F_2$, which can be significantly more cumbersome to handle compared to the direct test on the original descriptor system's equations proposed in Theorem~\ref{theor.nonlinearobsv}. 

\medskip\noindent
\textbf{Linear systems.}
If the pair $\{\bm F,\bm h\}$ is given by linear functions, then system \eqref{eq.implicitdescriptor} can be represented as
\begin{subequations}
    \begin{align}
        E\dot{\bm x} &= A\bm x + B\bm u, 
        \\
        \bm y &= C\bm x,
    \end{align}
\label{eq.lineardescriptor}%
\end{subequations}%
\noindent
where $A\in\R^{n\times n}$, $B\in\R^{n\times p}$, and $C\in\R^{q\times n}$. The matrix $E\in\R^{n\times n}$ is possibly singular and we assume that the control signal $\bm u = 0$ without loss of generality. Since we restrict the observability property to the reconstruction of consistent states $\bm x_0\in\mathbb L$ (Remark~\ref{rem.inconsistentstates}), Definition~\ref{def.obsvdae} reduces to the notion of R-observability for linear descriptor systems, which is defined globally as follows \cite{Hou1999,DuanBook2010}:

\begin{defn} \label{def.linearobsv}
 \textnormal{\cite{DuanBook2010}}
    The linear descriptor system \eqref{eq.lineardescriptor} is R-observable if, for any arbitrary $\bm x(t_0)\in\mathbb L$, $\bm x(t_0)$ can be uniquely determined from the output signal $\bm y(t)$ for some finite time interval $t\in [t_0,t_0+T]$.
\end{defn}

We now show that the R-observability condition follows as a special case of Theorem~\ref{theor.nonlinearobsv}:

\begin{prop} \label{corol.lineardescriptor} {\normalfont \textbf{(Observability of linear DAEs)}}
The following statements are equivalent:

\setlength{\abovedisplayskip}{2pt}
\setlength{\belowdisplayskip}{2pt}

\begin{enumerate}
    \item the linear descriptor system \eqref{eq.lineardescriptor} is R-observable;
    
    \item $\rank(O) = n + \rank\begin{bmatrix} O_{12} \\ O_{22} \end{bmatrix}$, where the  $n(n+q)\times n(n+1)$ observability matrix is defined by
    \begin{equation*}
        \hspace{-21pt}
        O = 
        \left[\begin{array}{c|c}
            O_{11} & O_{12} \\ \hline O_{21} & O_{22} 
        \end{array}\right]
        =
        \left[\begin{array}{c|ccccc}
            A & -E & 0 & \ldots & 0 & 0 \\
            0 & A & -E & \ldots & 0 & 0\\
            \vdots & \vdots & \vdots & \ddots  & \vdots & \vdots\\
            0 & 0 & 0 &  \ldots & A & -E \\
            \hline
            C & 0 & 0 & \ldots & 0 & 0 \\
            0 & C & 0 & \ldots & 0 & 0\\
            \vdots & \vdots & \vdots & \ddots  & \vdots & \vdots\\
            0 & 0 & 0 & \ldots & C & 0
        \end{array}\right];
        \end{equation*}
        \item $\rank\begin{bmatrix} \lambda E-A \\ C \end{bmatrix} = n$, $\forall\lambda\in\C$. 
    \end{enumerate}
\end{prop}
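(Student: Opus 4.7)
The plan is to establish the three-way equivalence by combining a direct specialization of Theorem~\ref{theor.nonlinearobsv} to the linear setting with the classical Hautus-type R-observability test for linear descriptor systems (e.g., \cite{Hou1999,DuanBook2010}). Concretely, I would prove (1) $\Leftrightarrow$ (2) from Theorem~\ref{theor.nonlinearobsv} and (1) $\Leftrightarrow$ (3) by invoking the standard PBH-type test, which closes the cycle. This parallels the structure used for Proposition~\ref{corol.obsvnonsingular} in the nonsingular case.

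For (1) $\Leftrightarrow$ (2), I set $\bm F(\bm x,\dot{\bm x}) = A\bm x - E\dot{\bm x}$ and $\bm h(\bm x) = C\bm x$. The successive differentiations in \eqref{eq.subsequentdifferentiation} then collapse to ${\rm d}^k\bm F/{\rm d}t^k = A\bm x^{(k)} - E\bm x^{(k+1)}$ and $\bm y^{(k)} = C\bm x^{(k)}$, so every Jacobian appearing in $\mathcal O(\bm x,\bm w)$ is a constant block equal to $A$, $-E$, or $C$. A direct block-by-block inspection shows that, for $\mu = \nu = n-1$, the matrix $\mathcal O(\bm x,\bm w)$ of Theorem~\ref{theor.nonlinearobsv} coincides with the matrix $O$ in statement~2 (up to an inconsequential row permutation). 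Applying the 1-fullness criterion of Lemma~\ref{lemma.1full} with $\bm v_1 = \bm x$ and $m_1 = n$ then translates the rank condition of Theorem~\ref{theor.nonlinearobsv} into the identity in statement~2. The necessity direction is immediate in the linear setting: if the rank identity fails for every $\mu,\nu$, one extracts a nonzero consistent mode whose entire derivative tower of outputs vanishes, contradicting R-observability.

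The equivalence (1) $\Leftrightarrow$ (3) is the classical Hautus-type test for R-observability of linear descriptor systems, stated and proved in \cite[Ch.~2]{DuanBook2010} and \cite{Hou1999}. A short Laplace-domain sketch recovers it: consistent trajectories satisfy $(sE-A)\hat{\bm x}(s) = E\bm x(t_0)$ and $\hat{\bm y}(s) = C\hat{\bm x}(s)$, so R-observability fails if and only if there exist nonzero $\bm v$ and $\lambda \in \C$ with $(\lambda E - A)\bm v = 0$ and $C\bm v = 0$, which is precisely a drop of rank of the pencil displayed in statement~3.

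The main obstacle I anticipate is the careful but routine bookkeeping needed to (i) match the block structure of $\mathcal O(\bm x,\bm w)$ to that of $O$ exactly, and (ii) certify that the truncation $\mu = \nu = n-1$ captures all of the rank information. Point~(ii) rests on a Cayley-Hamilton-type stabilization for the pencil $(E,A)$: beyond a finite order, additional differentiations produce rows that are linear combinations of existing ones, so the image space stops growing (cf.\ Remark~\ref{rem.differentiation}). A cleaner but heavier alternative would be a direct proof of (2) $\Leftrightarrow$ (3) via the Kronecker canonical form of the pencil, but routing both equivalences through R-observability keeps the argument short and leverages established results.
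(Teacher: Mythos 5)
Your overall architecture matches the paper's: both use Theorem~\ref{theor.nonlinearobsv} specialized to $\bm F(\bm x,\dot{\bm x})=A\bm x-E\dot{\bm x}$, $\bm h(\bm x)=C\bm x$ to get the sufficiency of statement~2 (the identification $\mathcal O(\bm x,\bm w)=O$ at order $\sigma=n$, constant rank, then Lemma~\ref{lemma.1full} and Definition~\ref{def.linearobsv}), and both take statement~3 as the classical R-observability criterion imported from \cite{Hou1999,DuanBook2010}. Your bookkeeping of the block structure and the truncation order is consistent with the statement (no row permutation is even needed, since $O$ already stacks all $\bm F$-rows above all $\bm h$-rows).

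The genuine gap is in your necessity direction for statement~2. You assert that if the rank identity fails for every $\mu,\nu$, ``one extracts a nonzero consistent mode whose entire derivative tower of outputs vanishes.'' Failure of 1-fullness only gives a vector $(\delta\bm x,\delta\bm w)\in\ker O$ with $\delta\bm x\neq 0$, i.e., a finite chain $A\,\delta\bm x^{(k)}=E\,\delta\bm x^{(k+1)}$, $C\,\delta\bm x^{(k)}=0$. To contradict R-observability you must additionally show that $\delta\bm x$ is a consistent initial condition and that this finite derivative-array chain extends to an actual solution trajectory of $E\dot{\bm x}=A\bm x$ with identically zero output; for a singular pencil this requires the Weierstrass/Kronecker structure (or an equivalent stabilization argument), which is exactly the ``heavier alternative'' you defer. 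The paper avoids this construction entirely: it closes the loop by relating statements~2 and~3 algebraically, observing that the first block column of $O$ is $[A^\transp\ C^\transp]^\transp$ padded with zeros, so the rank identity in statement~2 amounts to additivity of rank plus $\rank[A^\transp\ C^\transp]^\transp=n$, the latter being statement~3 evaluated at $\lambda=0$. (That argument is itself terse, but it is the route the paper takes.) If you want to keep your direct mode-extraction argument, you need to either carry out the Kronecker-canonical-form step or replace it with the paper's statement-3-to-statement-2 link.
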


\setlength{\abovedisplayskip}{7pt}
\setlength{\belowdisplayskip}{7pt}

\noindent
    \textit{Proof sketch.} 
    Statement 3 is a well-known sufficient and necessary condition in the literature for the R-observability of linear descriptor systems; a complete proof can be found in \cite[Sec. 4.3 and 4.5]{DuanBook2010}. Here, we show that statement 2 is sufficient for R-observability by applying Theorem~\ref{theor.nonlinearobsv} to linear functions. To prove the necessity of statement 2, we show that statements 2 and 3 are equivalent. See Appendix \ref{app.proofs} for details.  \QEDS 

If $E$ is nonsingular, then $\mathbb L = \R^n$ (all states are consistent) and, hence, the notion of R-observability (Definition~\ref{def.linearobsv}) reduces to the well-known observability property for linear nonsingular systems \cite{Kalman1959,Chi-TsongChen1999}. It follows directly from Proposition~\ref{corol.lineardescriptor} that:

\begin{cor} {\normalfont \textbf{(Observability of linear ODEs)}}
If $E$ is nonsingular, then the following statements are equivalent:

\begin{enumerate}[label={\arabic*$'$)}]
    \item the linear descriptor system \eqref{eq.lineardescriptor} is observable;
    
    \item $\rank (O')=n$, where $A' = E^{-1} A$ and
    \begin{equation*}
        O' = \begin{bmatrix} C^\transp & (CA')^\transp & \ldots & (C(A')^{n-1})^\transp \end{bmatrix}^\transp
    \end{equation*}
    is the Kalman's observability matrix;
    
    \item $ \rank\begin{bmatrix} \lambda I-A' \\ C \end{bmatrix} = n, \quad \forall\lambda\in\C$.
\end{enumerate}
\end{cor}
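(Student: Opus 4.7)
The plan is to derive each statement of the corollary directly from the corresponding statement of Proposition~\ref{corol.lineardescriptor} by specializing to the case where $E$ is nonsingular. When $E$ is invertible, every $\bm x\in\R^n$ is consistent, so $\mathbb{L}=\R^n$ and R-observability (Definition~\ref{def.linearobsv}) coincides with the classical observability of the ODE $\dot{\bm x} = A'\bm x$ with $A' = E^{-1}A$. Hence statement~$1'$ is exactly statement~1 of Proposition~\ref{corol.lineardescriptor}, and it suffices to verify that statements~$2'$ and~$3'$ are equivalent to statements~2 and~3 of that proposition, respectively.

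For $1'\Leftrightarrow 2'$, I would reinterpret the 1-fullness condition with respect to $\bm x$ directly. The top block rows of $O$ encode $A\bm w_{k-1} - E\bm w_k = 0$ for $k=1,\ldots,n$, with $\bm w_0 := \bm x$. Since $E$ is invertible, these recursively yield $\bm w_k = (A')^k \bm x$. Substituting into the bottom block rows $C\bm w_{k-1} = \bm y^{(k-1)}$ produces the $n$ relations $C(A')^{k-1}\bm x = \bm y^{(k-1)}$, so $\bm x$ is uniquely determined from $\bar{\bm y}$ if and only if $\rank(O') = n$. To match the precise formulation of statement~2, I would note that $O_{12}$ is block lower bidiagonal with the invertible block $-E$ on its diagonal, hence $\rank\begin{bmatrix}O_{12} \\ O_{22}\end{bmatrix} = n^2$, and statement~2 reduces to $\rank(O) = n(n+1)$, i.e., $O$ has full column rank. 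By the substitution above, this is equivalent to $\rank(O') = n$, which is statement~$2'$.

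For $1'\Leftrightarrow 3'$, I would use that the block-diagonal matrix $\operatorname{diag}(E^{-1},I_q)$ is nonsingular and hence rank-preserving. Applying it to the PBH-type matrix in statement~3 of Proposition~\ref{corol.lineardescriptor} shows that $\rank\begin{bmatrix}\lambda E - A \\ C\end{bmatrix} = n$ for every $\lambda\in\C$ is equivalent to $\rank\begin{bmatrix}\lambda I - A' \\ C\end{bmatrix} = n$ for every $\lambda\in\C$, which is precisely statement~$3'$.

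The main technical step is the substitution that reduces the large block matrix $O$ to Kalman's observability matrix $O'$; the bookkeeping requires tracking both the constraint (top) rows and the output (bottom) rows under $\bm w_k = (A')^k \bm x$. Fortunately, the invertibility of $E$ trivializes the constraint equations, so the argument is entirely a sequence of rank-preserving manipulations and no appeal to Cayley--Hamilton or index-reduction is needed. With this in hand, the corollary follows immediately from Proposition~\ref{corol.lineardescriptor}.
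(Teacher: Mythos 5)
Your proposal is correct and follows essentially the same route as the paper's proof: reduce to the ODE $\dot{\bm x}=A'\bm x$, use the invertibility of the block-bidiagonal $O_{12}$ to turn statement~2 into a full-column-rank condition on $O$, eliminate $\bm w$ to recover Kalman's matrix $O'$ (your explicit recursion $\bm w_k=(A')^k\bm x$ is exactly the Schur complement $O_{21}-O_{22}O_{12}^{-1}O_{11}=O'$ used in the paper), and obtain $3'$ from statement~3 by the rank-preserving multiplication by $E^{-1}$.
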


\begin{proof}
If $E$ is nonsingular, then the descriptor model \eqref{eq.lineardescriptor} can be expressed as a linear system of ODEs: $\dot{\bm x} = A'\bm x + B'\bm u$, where $A'=E^{-1}A$ and $B' = E^{-1}B$. 
It thus follows that statement 3 of Proposition~\ref{corol.lineardescriptor} reduces to statement 3$'$ (also known as the Popov-Belevitch-Hautus test).
Likewise, for a nonsingular $E$, statement 2 of Proposition~\ref{corol.lineardescriptor} holds if and only if $O$ is nonsingular (since $O_{12}$ is full rank). Moreover, $O$ is nonsingular if and only if $O_{21} - O_{22}O_{12}^{-1}O_{11} = O'$ is invertible, leading to statement 2$'$.  
\end{proof}


\section{\large Identifiability of Descriptor Systems}
\label{sec.idft}

We now extend the observability condition presented in Theorem \ref{theor.nonlinearobsv} to establish an algebraic identifiability condition for nonlinear descriptor systems.
Let us consider again the autonomous descriptor system
\begin{subequations}
    \begin{align}
        0 &= \bm F(\bm x,\dot{\bm x};\bm \theta), 
        \label{eq.dynamicsimplicit.parameter}
        \\
        \bm y &= \bm h(\bm x;\bm\theta),
        \label{eq.measurementimplicit.parameter}
    \end{align}
\label{eq.implicitdescriptor.parameter}%
\end{subequations}%
\noindent
where we explicitly denote $\bm\theta\in\R^p$ as the vector of unknown parameters sought to be identified. The notion of \textit{parameter identifiability} establishes the conditions under which, given a known model structure \eqref{eq.implicitdescriptor.parameter}, the unknown parameters $\bm\theta$ can be uniquely determined from a measurement signal $\bm y(t)$ recorded over $t\in[t_0,t_0+T]$. This is formally defined as follows.

\begin{defn}
    The descriptor system \eqref{eq.implicitdescriptor.parameter}, or the pair $\{\bm F,\bm h\}$, is \textit{locally identifiable at $\bm x_0\in\mathbb L$ with respect to parameter $\bm\theta_0$} if there exists a neighborhood $\mathcal U\subseteq\R^p$ of $\bm\theta_0$ such that, for every parameter $(\bm\theta'\in\mathcal U)\neq \bm\theta_0$, $\bm h\circ{\bm\Phi}_T(\bm x_0;\bm\theta_0) \neq \bm h\circ{\bm\Phi}_T(\bm x_0;\bm\theta')$ for some finite time interval $t\in [t_0,t_0+T]$. 
\label{def.identifiability}
\end{defn}

Comparing Definitions~\ref{def.obsvdae} and \ref{def.identifiability}, it becomes evident that the local identifiability problem can be treated as an observability problem if the parameters are represented as state variables of the descriptor model\textemdash a framework previously explored for nonsingular systems \cite{tunali1987new,Villaverde2016structural}.
To this end, we first augment the descriptor system \eqref{eq.implicitdescriptor.parameter} with additional state variables representing the parameters $\bm\theta$ sought to be identified (assuming constant dynamics $\dot{\bm\theta} = 0$ so that $\bm\theta(t)=\bm\theta$, for all $t\geq 0$). This yields the following augmented implicit model:
\begin{subequations}
\label{eq.implicitdescriptor.augmented}%
    \begin{align}
        0 &= \bm F(\bm x,\dot{\bm x},\bm \theta),
        \\
        0 &= \dot{\bm\theta},
        \\
        \bm y &= \bm h(\bm x,\bm\theta).
    \end{align}
\end{subequations}%
%
%
%
An algebraic (local) identifiability condition can now be derived for the augmented model \eqref{eq.implicitdescriptor.augmented} based on the 1-fullness property of linear systems (Lemma~\ref{lemma.1full}), analogously to the observability test presented in Theorem~\ref{theor.nonlinearobsv}.

\begin{thm} \label{theor.identifiability} {\normalfont \textbf{(Identifiability of nonlinear DAEs)}} Let the identifiability matrix be defined as
\begin{equation} \label{eq.identifmatrix}
        \mathcal I(\bm\theta, \bm z) = 
        \begin{bmatrix}
            \dfrac{\partial \bar{\bm F}_{\mu}(\bm\theta,\bm z)}{\partial \bm \theta} 
            &&
            \dfrac{\partial\bar{\bm F}_{\mu}(\bm\theta,\bm z)}{\partial\bm z} 
            \\
            \dfrac{\partial\bar{\bm H}_{\nu}(\bm\theta,\bm z)}{\partial \bm \theta} 
            && 
            \dfrac{\partial\bar{\bm H}_{\nu}(\bm\theta,\bm z)}{\partial\bm z}
        \end{bmatrix},
\end{equation}
\noindent
where $\bm z = [\bm x^\transp \,\, \dot{\bm x}^\transp \,\, \ldots \,\, \bm x^{(\sigma) \, \transp}]^\transp$ and $\sigma = \max\{\mu+1,\nu\}$.
The descriptor system \eqref{eq.implicitdescriptor.augmented}, or the pair $\{\bm F,\bm h\}$, is locally identifiable at $\bm x_0\in\mathbb L$ with respect to $\bm\theta_0$ if
    \begin{equation}
        \rank(\mathcal I(\bm\theta,\bm z)) = p + \rank
        \begin{bmatrix}
            \dfrac{\partial\bar{\bm F}_{\mu}(\bm\theta,\bm z)}{\partial\bm z} 
            \\
            \dfrac{\partial\bar{\bm H}_{\nu}(\bm\theta,\bm z)}{\partial\bm z}
        \end{bmatrix}
    \label{eq.identifiabilitytest}
    \end{equation}
    \noindent
    holds at $(\bm x,\bm\theta) = (\bm x_0,\bm\theta_0)$ for some $\mu$ and $\nu$.
\end{thm}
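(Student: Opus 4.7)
My plan is to reduce parameter identifiability to a \emph{partial} observability problem on the augmented descriptor system (13), and then invoke the 1-fullness machinery of Lemma~\ref{lemma.1full} almost verbatim as in the proof of Theorem~\ref{theor.nonlinearobsv}.

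First, I would view (13) as an implicit descriptor system on the augmented state $\tilde{\bm x} = [\bm x^\transp \,\, \bm\theta^\transp]^\transp$, with implicit equation $\tilde{\bm F}(\tilde{\bm x},\dot{\tilde{\bm x}}) = [\bm F(\bm x,\dot{\bm x},\bm\theta)^\transp \,\, \dot{\bm\theta}^\transp]^\transp$ and output $\tilde{\bm h}(\tilde{\bm x}) = \bm h(\bm x,\bm\theta)$. Comparing Definition~\ref{def.obsvdae} with Definition~\ref{def.identifiability}, local identifiability of $\bm\theta_0$ at $\bm x_0$ is equivalent to being able to locally reconstruct the $\bm\theta$-component of $\tilde{\bm x}_0$ from $\bm h\circ {\bm\Phi}_T$, while leaving the $\bm x$-component free (it is fixed at $\bm x_0$ in the definition). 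That is, identifiability asks for \emph{1-fullness with respect to $\bm\theta$ only}, rather than with respect to the full augmented state.

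Second, I would differentiate $\tilde{\bm F}$ and $\tilde{\bm h}$ successively, mirroring (10), and take the Jacobian of the resulting system at $(\bm x_0,\bm\theta_0)$. This produces a local linear map of the schematic form
\begin{equation*}
\tilde{\mathcal O}
\begin{bmatrix} \bm\theta \\ \bm x \\ \bm w \end{bmatrix}
=
\begin{bmatrix} 0 \\ \bar{\bm y} \end{bmatrix},
\end{equation*}
where $\bm w$ gathers the higher time derivatives of $\bm x$. The trivially augmented rows $\dot{\bm\theta}=0,\ddot{\bm\theta}=0,\ldots$ contribute only zero blocks in the $\bm\theta$-columns and identity-like blocks in columns for $\dot{\bm\theta},\ddot{\bm\theta},\ldots$, which can be eliminated from the rank test (by Schur-complement/row-reduction) without changing its conclusion. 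Grouping the remaining auxiliary columns as $\bm z = [\bm x^\transp \,\, \dot{\bm x}^\transp \,\, \ldots \,\, \bm x^{(\sigma)\,\transp}]^\transp$ recovers exactly the identifiability matrix $\mathcal I(\bm\theta,\bm z)$ of (14). Then, applying Lemma~\ref{lemma.1full} in its third form with $\bm v_1 \leftrightarrow \bm\theta$ and $\bm v_2 \leftrightarrow \bm z$ yields precisely the rank condition (15).

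Third, since $\bm F$ and $\bm h$ are smooth, the Constant Rank Theorem lifts this pointwise linear 1-fullness to a local injective dependence of the nonlinear map $\bm\theta \mapsto \bm h\circ{\bm\Phi}_T(\bm x_0;\bm\theta)$ in a neighborhood of $\bm\theta_0$, which is exactly the notion of local identifiability of Definition~\ref{def.identifiability}; this step is identical to the corresponding argument behind Theorem~\ref{theor.nonlinearobsv}.

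The main obstacle will be the second step: carefully handling the trivially augmented rows stemming from $\dot{\bm\theta}=0$ and its derivatives, and verifying that their elimination does not alter 1-fullness with respect to $\bm\theta$, so that (15) is actually equivalent to the 1-fullness condition applied to the full augmented Jacobian $\tilde{\mathcal O}$. Once this bookkeeping is settled, the argument parallels the proof of Theorem~\ref{theor.nonlinearobsv} with ``partial'' (parameter-only) 1-fullness replacing the full 1-fullness with respect to the state.
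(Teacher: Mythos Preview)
Your proposal is correct and follows essentially the same approach as the paper: the paper also augments the system, differentiates, takes the Jacobian ${\rm D}\bm\Psi$ of the stacked map $[\bar{\bm F}_\mu^\transp\,\,\bm\Theta^\transp\,\,\bar{\bm H}_\nu^\transp]^\transp$, and then shows that the columns corresponding to $\dot{\bm\theta},\ddot{\bm\theta},\ldots$ have trivial intersection with the $\bm\theta$-columns so that 1-fullness with respect to $\bm\theta$ reduces exactly to condition~\eqref{eq.identifiabilitytest} via Lemma~\ref{lemma.1full}, before invoking the Constant Rank Theorem. The ``main obstacle'' you flag is precisely the bookkeeping the paper carries out explicitly, and your anticipated resolution (the $\dot{\bm\theta}$ blocks contribute only an identity-like block disjoint from the $\bm\theta$-columns) is exactly how the paper handles it.
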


\begin{proof}
    See Appendix \ref{app.proofs}.  
\end{proof}

\vspace{-0.1cm}

\begin{rem}
    The rank condition \eqref{eq.identifiabilitytest} is state dependent, meaning that the system may be identifiable (or unidentifiable) only in a subregion of the parameter space and the set of consistent states (i.e., in a subset of $\R^p\times\mathbb L$). This is illustrated in detail in Section~\ref{example.chemical}.
\end{rem}

\vspace{-0.1cm}

\begin{rem}
\label{rem.almostnecessary}
    The observability condition \eqref{eq.daeobsv} is sufficient and ``almost'' necessary for nonlinear descriptor systems. From \cite[Corollary 112]{Vidya93}, the observability condition is necessary on a dense subset of $\mathcal{X}$, which implies that it is also a necessary condition when restricted to $\mathbb L\subseteq\mathcal X$ if $\mathbb L$ is dense. Thus, not satisfying condition \eqref{eq.daeobsv} is a strong indication on the unobservability of a system. The same argument can be extended to the identifiability test \eqref{eq.identifiabilitytest}.
\end{rem}



\vspace{-0.1cm}

Theorems~\ref{theor.nonlinearobsv} and \ref{theor.identifiability} provide a clear advantage for the observability and identifiability analysis of nonlinear descriptor systems: they establish algebraic conditions, which can be directly checked on the original system's equations \eqref{eq.implicitdescriptor.parameter}. This provides a practical and simple test that can guide experimental design for system identification, including optimal sensor placement (which variables to measure) and excitation conditions (which regions of the state space to explore for data collection in experiments). We explore these advantages in system analysis with examples presented in Section \ref{sec.exmp}.

\vspace{-0.1cm}

An important underlying assumption in Theorems \ref{theor.nonlinearobsv} and \ref{theor.identifiability} is that the observability and identifiability analyses are evaluated locally in the neighborhood of a consistent state (Remark \ref{rem.inconsistentstates}). The definition of a consistent state in this work (set $\mathbb L$) requires an implicit model \eqref{eq.implicitdescriptor} to have an equivalent representation in semi-explicit form \eqref{eq.DAEsemiexplicit}. This is often not a problem since DAEs in semi-explicit form arise naturally in the modeling of many technological, biochemical, and environmental systems. However, if only an implicit model is available, it may not be easy or even possible to derive its corresponding semi-explicit representation.
Despite this challenge, if a consistent state can be determined (e.g., numerically), then the conditions in Theorems~\ref{theor.nonlinearobsv} and \ref{theor.identifiability} can be applied directly to the system's equations in implicit form, not requiring conversion to semi-explicit form.

\subsection{Linear descriptor systems}
\label{sec.linear}

Given that the observability test (Theorem \ref{theor.nonlinearobsv}) is applicable both to (non)linear DAEs and ODEs, as proven in Section \ref{sec.obsvrelation}, it follows that the identifiability test (Theorem \ref{theor.identifiability}) is also applicable to all these classes of systems. In what follows, we explore the structure of linear descriptor systems to determine sufficient conditions in which the identifiability condition \eqref{eq.identifiabilitytest} is always locally satisfied, ensuring system identifiability.

\vspace{-0.1cm}

Consider the linear descriptor system \eqref{eq.lineardescriptor}. In this scenario, the parameters $\bm\theta\in\R^p$ sought to be identified are typically a subset of the matrix elements $A_{ij}$ (i.e., $\theta_k = A_{ij}$ if $A_{ij}$ is an unknown parameter, for $k=1,\ldots,p$ and $p\leq n^2$). To this end, we model the parameters as state variables with constant dynamics [as in system \eqref{eq.implicitdescriptor.augmented}], yielding
\begin{subequations}
\label{eq.lineardescriptor.augmented}%
    \begin{align}
        E\dot{\bm x} &= A(\bm\theta)\bm x,
        \\
        \dot{\bm\theta} &= 0,
        \\
        \bm y &= C\bm x,
    \end{align}
\end{subequations}%
\noindent
where we explicitly denote the dependence of $A$ on the parameters $\bm\theta$.
The identifiability analysis of system \eqref{eq.lineardescriptor.augmented} can thus be conducted by directly applying Theorem \ref{theor.identifiability} on the functions $\bm F(\bm x,\dot{\bm x}) = A\bm x - E{\dot{\bm x}}$ and $\bm h(\bm x) = C\bm x$. 
Since parameters $\bm\theta$ are treated as state variables, the identifiability of linear descriptor systems is typically a \textit{nonlinear} problem due to the matrix-vector multiplication $A(\bm\theta)\bm x$. Thus, Theorem \ref{theor.identifiability} still only establishes a sufficient condition for the identifiability of linear descriptor systems. 
Nonetheless, the particular structure of the descriptor system \eqref{eq.lineardescriptor.augmented} provides a concise form for the identifiability matrix \eqref{eq.identifmatrix}, which can be described by
\begin{align}
    \label{eq.identifmatrix.linear}
        \mathcal I(\bm\theta, \bm z) = 
        \left[\begin{array}{c|c}
            \mathcal I_{11}(\bm\theta,\bm z)
            &
            \Delta_{1}\otimes A - \Delta_2\otimes E
            \\ \hline
            0_{q(\nu+1)\times p}
            & 
            \Delta_{1}\otimes C
        \end{array}\right],
\end{align}
\noindent
where $\Delta_{1} = [I_{\sigma} \,\, 0_{\sigma n\times n}]$, $\Delta_2 = [0_{\sigma n\times n} \,\, I_{\sigma}]$, and we assumed that $\mu = \nu$. The matrix block $\mathcal I_{11}$ is defined as
\begin{equation*}
    \mathcal I_{11}(\bm\theta,\bm z) := \dfrac{\partial}{\partial \bm\theta} \begin{bmatrix}
                A(\bm\theta)\bm x \\
                A(\bm\theta)\dot{\bm x} \\
                \vdots \\
                A(\bm\theta)\bm x^{(\mu)}
            \end{bmatrix}.
\end{equation*}
By applying Theorem \ref{theor.identifiability}, we directly derive the following identifiability test for linear descriptor systems:
\begin{cor}
    \label{cor.identifiability.linear} {\normalfont \textbf{(Identifiability of linear DAEs)}} Consider the identifiability matrix $\mathcal I(\bm\theta,\bm z)$ defined in Eq. \eqref{eq.identifmatrix}. The descriptor system \eqref{eq.lineardescriptor.augmented} is locally identifiable at $\bm x_0\in\mathbb L$ with respect to $\bm\theta_0$ if
    \begin{equation} \label{eq.identcond.linear}
        \rank(\mathcal I(\bm\theta,\bm z)) = p + \rank
        \begin{bmatrix}
            \Delta_{1}\otimes A - \Delta_2\otimes E
            \\
            \Delta_{1}\otimes C
        \end{bmatrix}
    \end{equation}
    \noindent
    holds at $(\bm x,\bm\theta) = (\bm x_0,\bm\theta_0)$ for some $\mu = \nu$.
\end{cor}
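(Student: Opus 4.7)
The plan is to derive Corollary~\ref{cor.identifiability.linear} as a direct specialization of Theorem~\ref{theor.identifiability} to the linear setting, the real content being the explicit block form \eqref{eq.identifmatrix.linear} of the identifiability matrix. I would begin by recognizing that the augmented linear descriptor system \eqref{eq.lineardescriptor.augmented} is a special instance of \eqref{eq.implicitdescriptor.augmented} with
\begin{equation*}
    \bm F(\bm x,\dot{\bm x};\bm\theta)=A(\bm\theta)\bm x - E\dot{\bm x}, \qquad \bm h(\bm x;\bm\theta)=C\bm x.
\end{equation*}
Since $\bm F$ is affine in $(\bm x,\dot{\bm x})$ and $\bm h$ is linear in $\bm x$, differentiating in time is straightforward: the $i$th time derivative $\bm F^{(i)}$ equals $A(\bm\theta)\bm x^{(i)} - E\bm x^{(i+1)}$, and $\bm h^{(i)} = C\bm x^{(i)}$. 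With $\mu=\nu$, this fixes the structure of $\bar{\bm F}_\mu$ and $\bar{\bm H}_\nu$ and identifies the highest derivative as $\bm x^{(\sigma)}$ with $\sigma=\mu+1$.

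Next, I would compute each of the four blocks of $\mathcal I(\bm\theta,\bm z)$ in \eqref{eq.identifmatrix}. The lower-left block $\partial \bar{\bm H}_\nu/\partial\bm\theta$ vanishes identically, because $C$ does not depend on $\bm\theta$ and neither do its Lie derivatives along the $\dot{\bm\theta}=0$ dynamics; this is the $0_{q(\nu+1)\times p}$ block in \eqref{eq.identifmatrix.linear}. The upper-left block $\partial \bar{\bm F}_\mu/\partial\bm\theta$ collects the Jacobians of $A(\bm\theta)\bm x^{(i)}$ with respect to $\bm\theta$ for $i=0,\ldots,\mu$, which is precisely the block $\mathcal I_{11}(\bm\theta,\bm z)$ displayed after \eqref{eq.identifmatrix.linear}. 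For the right column, differentiating the stacked derivatives of $\bm F$ and $\bm h$ with respect to $\bm z=[\bm x^\transp\,\dot{\bm x}^\transp\,\cdots\,\bm x^{(\sigma)\transp}]^\transp$ produces a banded block pattern: each row block of $\bar{\bm F}_\mu$ contributes $A$ against the entry $\bm x^{(i)}$ and $-E$ against $\bm x^{(i+1)}$, while each row block of $\bar{\bm H}_\nu$ contributes $C$ against $\bm x^{(i)}$. Packaging this into Kronecker form using the shift selectors $\Delta_1,\Delta_2$ yields $\partial\bar{\bm F}_\mu/\partial\bm z = \Delta_1\otimes A - \Delta_2\otimes E$ and $\partial\bar{\bm H}_\nu/\partial\bm z = \Delta_1\otimes C$, matching \eqref{eq.identifmatrix.linear} exactly.

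Finally, applying Theorem~\ref{theor.identifiability} to this identifiability matrix gives the sufficient rank condition: local identifiability at $(\bm x_0,\bm\theta_0)$ holds if the number of rank contributions from the parameter column equals $p$ on top of the rank of the right block. Since the bottom-left block is zero, the full rank of $\mathcal I$ decomposes cleanly as $\operatorname{rank}(\mathcal I_{11})$-contribution plus the right-block rank, so condition \eqref{eq.identifiabilitytest} reduces verbatim to \eqref{eq.identcond.linear}. The main obstacle I anticipate is a notational one: checking the block-wise action of $\Delta_1$ and $\Delta_2$ against the Kronecker products and confirming the dimensions align with $\sigma=\mu+1$ and $\mu=\nu$; once this bookkeeping is written out carefully, the corollary follows immediately from Theorem~\ref{theor.identifiability} and no further nonlinear analysis is needed.
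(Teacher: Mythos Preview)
Your proposal is correct and matches the paper's approach exactly: the paper derives the block form \eqref{eq.identifmatrix.linear} for the linear functions $\bm F(\bm x,\dot{\bm x})=A(\bm\theta)\bm x-E\dot{\bm x}$ and $\bm h(\bm x)=C\bm x$, and then states that Corollary~\ref{cor.identifiability.linear} follows ``directly'' by applying Theorem~\ref{theor.identifiability}. One minor remark: the observation about the zero lower-left block is not needed here, since \eqref{eq.identcond.linear} is obtained from \eqref{eq.identifiabilitytest} simply by substituting $\partial\bar{\bm F}_\mu/\partial\bm z=\Delta_1\otimes A-\Delta_2\otimes E$ and $\partial\bar{\bm H}_\nu/\partial\bm z=\Delta_1\otimes C$; the zero-block argument becomes relevant only in the subsequent Corollary~\ref{cor.idflinearfullmeas}.
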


\bigskip
\begin{rem}
    For applications in which \textit{all} elements $A_{ij}$ are sought to be identified [i.e., $\bm\theta = \operatorname{vec}(A)$], it follows that
\begin{equation} \label{eq.Athetax}
    \mathcal I_{11}(\bm\theta,\bm z)
    = 
    \begin{bmatrix}
        \bm x^\transp \otimes I_n \\
        \dot{\bm x}^\transp \otimes I_n \\
        \vdots \\
        {\bm x}^{(\mu)^\transp} \otimes I_n \\
    \end{bmatrix} 
    =
    \begin{bmatrix}
        \bm x^\transp \\ \dot{\bm x}^\transp \\ \vdots \\ (\bm x^{(\mu)})^\transp
    \end{bmatrix} \otimes I_n.
\end{equation}
\end{rem}

We now leverage the structure of the identifiability matrix \eqref{eq.identifmatrix.linear} to derive some special, easier-to-check identifiability conditions for linear descriptor systems under the assumption that full-state measurement is available (i.e., $C=I_n$).

\begin{cor} \label{cor.idflinearfullmeas}
{\normalfont \textbf{(Identifiability of linear DAEs with full-state measurement)}}
    Suppose $C = I_n$. The descriptor system \eqref{eq.lineardescriptor.augmented} is locally identifiable at $\bm x_0\in\mathbb L$ with respect to $\bm\theta_0$ if
    \begin{equation} \label{eq.rankAtheta}
        \rank\left(\mathcal I_{11}(\bm\theta,\bm z)\right) = p
    \end{equation}
    \noindent
    holds at $(\bm x,\bm\theta) = (\bm x_0,\bm\theta_0)$ for some $\mu$.
\end{cor}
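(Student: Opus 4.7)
The approach I would take is to exploit the full-state measurement assumption to reduce the identifiability test to a single linear system for $\bm\theta$ whose coefficient matrix is exactly $\mathcal I_{11}$; the rank hypothesis then directly enforces uniqueness of $\bm\theta$, without having to pass through the weaker bookkeeping of Corollary~\ref{cor.identifiability.linear}.

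First, since $C=I_n$, the measurement $\bm y(t)=\bm x(t)$ and its time derivatives recover every $\bm x^{(k)}(t)$, so any two parameter vectors that yield the same output on $[t_0,t_0+T]$ also yield the same state trajectory and hence the same $\bm z(t)$. Next, I would differentiate the descriptor equation $E\dot{\bm x}=A(\bm\theta)\bm x$ a total of $\mu$ times to obtain $E\bm x^{(k+1)}=A(\bm\theta)\bm x^{(k)}$ for $k=0,\ldots,\mu$. Because every $\theta_l$ is an entry $A_{ij}$, the matrix $A$ is linear in $\bm\theta$ and therefore $A(\bm\theta)\bm x^{(k)}=\bigl[\partial(A\bm x^{(k)})/\partial\bm\theta\bigr]\bm\theta$. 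Stacking these block equations produces, by the very definition of $\mathcal I_{11}$ given in \eqref{eq.identifmatrix.linear}, the single linear system
\begin{equation*}
    \mathcal I_{11}(\bm z)\,\bm\theta
    \;=\;
    \bigl[(E\dot{\bm x})^{\transp}\;\cdots\;(E\bm x^{(\mu+1)})^{\transp}\bigr]^{\transp},
\end{equation*}
where the right-hand side is fully determined by the data and $\mathcal I_{11}$ does not depend on $\bm\theta$.

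Given any candidate $\bm\theta'$ in a neighborhood of $\bm\theta_0$ whose trajectory through $\bm x_0$ matches the output of $\bm\theta_0$, the construction above shows that $\bm\theta_0$ and $\bm\theta'$ satisfy this same linear system with the same right-hand side. Subtracting yields $\mathcal I_{11}(\bm z_0)(\bm\theta_0-\bm\theta')=\bm 0$, where $\bm z_0$ is the common value of $\bm z$ at $t_0$. Hypothesis~\eqref{eq.rankAtheta} at $(\bm x_0,\bm\theta_0)$ then forces $\bm\theta'=\bm\theta_0$, verifying Definition~\ref{def.identifiability}.

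The main subtlety I expect is not algebraic but concerns the DAE consistency set: $\bm x_0\in\mathbb L$ is defined relative to the true $\bm\theta_0$, yet a perturbed $\bm\theta'$ might not admit $\bm x_0$ as a consistent initial condition, leaving $\bm\Phi_T(\bm x_0;\bm\theta')$ undefined. I would dispatch this by observing that such a $\bm\theta'$ cannot produce any competing output trajectory through $\bm x_0$ and therefore cannot obstruct identifiability; only parameters preserving consistency of $\bm x_0$ need to be excluded, and these are exactly the ones handled by the linear-algebraic argument above.
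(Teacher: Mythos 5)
Your proof is correct, but it takes a genuinely different route from the paper's. The paper proves the corollary by verifying the hypothesis of Theorem~\ref{theor.identifiability}: it observes that when $C=I_n$ the column space of the $\bm\theta$-block $[\mathcal I_{11}^\transp \,\, 0]^\transp$ and that of the $\bm z$-block of \eqref{eq.identifmatrix.linear} intersect trivially (because $\Delta_1\otimes C$ is injective on the relevant coordinates), so by the equivalence of statements 2 and 3 in Lemma~\ref{lemma.1full} the rank condition \eqref{eq.rankAtheta} on $\mathcal I_{11}$ alone already implies the full 1-fullness condition \eqref{eq.identifiabilitytest}. You instead bypass the identifiability matrix machinery and argue directly from Definition~\ref{def.identifiability}: full-state measurement pins down the trajectory and hence all of $\bm z$, the differentiated descriptor equations become a single linear system $\mathcal I_{11}(\bm z)\bm\theta = (\text{data})$, and the rank hypothesis forces uniqueness of $\bm\theta$ after subtraction. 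Your version is more elementary and self-contained (it even sidesteps the paper's somewhat terse ``holds by construction'' claim about the column-space intersection, and it handles the consistency-set subtlety explicitly, which the paper does not), at the cost of re-deriving a special case of Theorem~\ref{theor.identifiability} rather than invoking it. Two cosmetic points: (i) since the paper allows $\bm\theta$ to be only a \emph{subset} of the entries of $A$ ($p\le n^2$), the map $\bm\theta\mapsto A(\bm\theta)$ is affine rather than linear, so $A(\bm\theta)\bm x^{(k)} = \bigl[\partial(A\bm x^{(k)})/\partial\bm\theta\bigr]\bm\theta + A_0\bm x^{(k)}$ with a known constant offset $A_0\bm x^{(k)}$; this offset cancels in your subtraction step, so the conclusion is unaffected, but the identity as you state it is not exact. (ii) Your argument implicitly uses that the competing trajectory under $\bm\theta'$ is smooth enough to differentiate $\mu+1$ times at $t_0$, which is fine here because the system is linear and $\bm x_0$ is assumed consistent for $\bm\theta'$ in the only case that matters.
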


\begin{proof}
    Recall that the identifiability matrices \eqref{eq.identifmatrix} and \eqref{eq.identifmatrix.linear} are equal for the linear functions $\bm F(\bm x,\dot{\bm x}) = A\bm x - E\dot{\bm x}$ and $\bm h(\bm x) = C\bm x$. If $C=I_n$, then
    \begin{equation*}
        \operatorname{col}\left(\begin{bmatrix}
            \mathcal I_{11}(\bm\theta,\bm z) 
            \\
            0_{q(\nu+1)\times p}
        \end{bmatrix}\right)
        \cap
        \operatorname{col}\left(\begin{bmatrix}
            \Delta_{1}\otimes A - \Delta_2\otimes E
            \\
            \Delta_{1}\otimes C
        \end{bmatrix}\right)
         = \emptyset
    \end{equation*}
\noindent
holds by construction. Hence, due to the equivalence of statements 2 and 3 in Lemma \ref{lemma.1full}, if condition \eqref{eq.rankAtheta} is satisfied, then condition \eqref{eq.identifiabilitytest} is also satisfied.  
\end{proof}

Corollary \ref{cor.idflinearfullmeas} highlights that full-state measurement is not sufficient for the identifiability of general linear descriptor systems. The identifiability will thus depend on the structure of the system (encoded by matrices $A$ and $E$) as well as the set of parameters to be identified. In particular, if the system is represented in semi-explicit form
\begin{equation}
\label{eq.lineardescriptor.semiexplcit}%
        \begin{bmatrix}
            I_{n_1} & 0 \\ 0 & 0
        \end{bmatrix}
        \begin{bmatrix}
            \dot{\bm x}_1 \\ \dot{\bm x}_2
        \end{bmatrix}
        = 
        \begin{bmatrix}
            A_{11} & A_{12} \\ A_{21} & A_{22}
        \end{bmatrix}
        \begin{bmatrix}
            {\bm x}_1 \\ {\bm x}_2
        \end{bmatrix}
\end{equation}
\noindent
further sufficient conditions can be derived depending on the set of parameters sought to be identified. Eq. \eqref{eq.lineardescriptor.semiexplcit} is a special form of the DAE system \eqref{eq.DAEsemiexplicit} for linear functions, where $\bm x_1\in\R^{n_1}$ and $\bm x_2\in\R^{n_2}$ are the differential and algebraic variables, respectively. Matrices $A_{12}$ and $A_{21}$ describe interactions between differential and algebraic variables. 
Depending on the application, certain submatrices may be known a priori and hence the set of unknown parameters can be restricted to the remaining unknown matrices, e.g., $\bm\theta = \operatorname{vec}(A_{11},A_{21})$. The following result shows that some combinations of parameters are identifiable if full-state measurement is available.

\begin{assum} \label{assump.index1lineardae}
The linear DAE \eqref{eq.lineardescriptor.semiexplcit} is index 1, that is, $A_{22}$ is nonsingular. This implies that $\bm x_2 = - A_{22}^{-1}A_{21}\bm x_1$ and hence the DAEs can be reduced to the ODEs $\dot{\bm x}_1 = A_{\rm c}\bm x_1$, where $A_{\rm c} = A_{11} - A_{12}A_{22}^{-1}A_{21}$.
\end{assum}

\begin{cor} \label{cor.idftlinear.semiexplict} {\normalfont \textbf{(Identifiability of semi-explicit linear DAEs)}}
    Suppose $C = I_n$ and Assumption \ref{assump.index1lineardae} holds. If $A_{21}$ is full rank, then the descriptor system \eqref{eq.lineardescriptor.semiexplcit} is locally identifiable at $\bm x_0\neq 0$ with respect to the set of parameters $\bm\theta_0 = \operatorname{vec}(A_{11},A_{22})$ or $\bm\theta_0 = \operatorname{vec}(A_{12},A_{21})$.
\end{cor}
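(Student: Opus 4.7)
The plan is to apply Corollary~\ref{cor.idflinearfullmeas}, which under $C=I_n$ reduces identifiability to checking $\rank(\mathcal I_{11}(\bm\theta,\bm z)) = p$ for some $\mu$. Since both parameter choices single out entries from two different block rows of $A$, I expect $\mathcal I_{11}$ to split (after a row permutation) into a block-diagonal matrix, so that the test decouples into two independent rank conditions on the state derivatives.

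Concretely, I would compute $\partial(A(\bm\theta)\bm x^{(k)})/\partial\bm\theta$ using the Kronecker identity $\partial(A\bm v)/\partial\operatorname{vec}(A) = \bm v^\transp\otimes I$ that underlies Eq.~\eqref{eq.Athetax}. For $\bm\theta = \operatorname{vec}(A_{11},A_{22})$, the $A_{11}$-part produces $(\bm x_1^{(k)})^\transp\otimes I_{n_1}$ in the first $n_1$ rows and the $A_{22}$-part produces $(\bm x_2^{(k)})^\transp\otimes I_{n_2}$ in the last $n_2$ rows, with disjoint supports. Stacking over $k=0,\ldots,\mu$ and permuting rows yields
\begin{equation*}
    \mathcal I_{11} \;\sim\; \begin{bmatrix} M_1\otimes I_{n_1} & 0 \\ 0 & M_2\otimes I_{n_2} \end{bmatrix},\qquad M_i = \bigl[\,\bm x_i\;\dot{\bm x}_i\;\cdots\;\bm x_i^{(\mu)}\,\bigr]^\transp.
\end{equation*}
By the Kronecker rank identity, $\rank(\mathcal I_{11}) = n_1\rank(M_1) + n_2\rank(M_2)$, so the required rank $p = n_1^2+n_2^2$ is attained iff $\rank(M_1) = n_1$ and $\rank(M_2) = n_2$. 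The case $\bm\theta = \operatorname{vec}(A_{12},A_{21})$ is analogous, with the roles of $\bm x_1$ and $\bm x_2$ exchanged in the two Kronecker factors, and reduces to the same pair of rank conditions.

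Assumption~\ref{assump.index1lineardae} then lets me substitute $\bm x_2(t) = -A_{22}^{-1}A_{21}\bm x_1(t) = T\bm x_1(t)$ and $\bm x_1^{(k)}(t_0) = A_c^k\bm x_1(t_0)$, so the columns of $M_1^\transp$ span the Krylov subspace $\mathcal K_{\mu+1}(A_c,\bm x_1(t_0))$ and the columns of $M_2^\transp$ span $T\cdot\mathcal K_{\mu+1}(A_c,\bm x_1(t_0))$. Taking $\mu\geq n_1-1$, the first Krylov subspace equals $\R^{n_1}$ whenever $\bm x_1(t_0)$ is a cyclic vector for $A_c$; the full-rank hypothesis on $A_{21}$ (equivalently on $T$) then transports this to $\rank(M_2) = n_2$. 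The main obstacle is precisely the cyclicity step, which is not guaranteed by $\bm x_0\neq 0$ alone but only on a Zariski-open subset of consistent states. I would handle this by appealing to Remark~\ref{rem.almostnecessary}: the rank condition \eqref{eq.identifiabilitytest} can only fail on a proper algebraic subvariety, so the statement ``locally identifiable at $\bm x_0\neq 0$'' is naturally interpreted as holding on a dense open subset of $\mathbb L$.
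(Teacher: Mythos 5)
Your proposal follows essentially the same route as the paper's proof: reduce to Corollary~\ref{cor.idflinearfullmeas} via $C=I_n$, exploit the block structure of $\mathcal I_{11}$ so that $\rank(\mathcal I_{11}) = n_1\rank(M_1)+n_2\rank(M_2)$ with $M_i=[\bm x_i\;\dot{\bm x}_i\;\cdots\;\bm x_i^{(\mu)}]^\transp$, and then use Assumption~\ref{assump.index1lineardae} to turn $M_1^\transp$ into the Krylov matrix $[\bm x_1\;A_{\rm c}\bm x_1\;\cdots\;A_{\rm c}^{\mu}\bm x_1]$ and $M_2^\transp$ into $-A_{22}^{-1}A_{21}M_1^\transp$. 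The one point where you diverge is the step you flag yourself, and it is worth being explicit that this is not a defect of your write-up: the paper's proof disposes of $\rank(M_1)=n_1$ by rewriting the Krylov matrix as ``$[A_{\rm c}^0\,\cdots\,A_{\rm c}^{\mu}]\otimes\bm x_1$'' and applying $\rank(A\otimes B)=\rank(A)\rank(B)$ to get $n_1\cdot 1$, but $[\bm x_1\;A_{\rm c}\bm x_1\;\cdots\;A_{\rm c}^{\mu}\bm x_1]$ is not a Kronecker product of that form, and its rank is the dimension of the Krylov subspace generated by $\bm x_1(t_0)$, which equals $n_1$ only when $\bm x_1(t_0)$ is a cyclic vector of $A_{\rm c}$ (it drops to $1$ if $\bm x_1(t_0)$ is an eigenvector and $n_1\geq 2$). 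So your cyclicity caveat identifies a genuine gap in the paper's own argument, and your resolution via Remark~\ref{rem.almostnecessary}\textemdash identifiability on a dense (Zariski-open) subset of $\mathbb L\setminus\{0\}$\textemdash is the honest form of the conclusion; the literal claim ``at every $\bm x_0\neq 0$'' is too strong. One further implicit assumption you share with the paper: $\rank(M_2)=n_2$ forces $n_2\leq n_1$, since $\rank(A_{22}^{-1}A_{21}M_1^\transp)\leq\rank(A_{21})=\min\{n_1,n_2\}$, so ``$A_{21}$ full rank'' must be read as full row rank.
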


\begin{proof}
    See Appendix \ref{app.proofs}.  
\end{proof}

Other parameter combinations like $\bm\theta = \operatorname{vec}(A_{11},A_{12})$ may also be identifiable in the case of full-state measurement depending on the sparsity of the matrix $A$. These scenarios are further explored numerically in Section \ref{example.linear}.

The identifiability analysis of linear descriptor systems can also be applied to linear nonsingular systems (i.e., when $E=I_n$). For such cases, the following corollary shows that\textemdash contrariwise to descriptor systems\textemdash full-state measurement is sufficient for the complete identification of linear ODE systems (which is a well-known condition in the literature \cite{Delforge, Walter1976identifiability}).

\begin{cor} \label{cor.idftlinear.ode} {\normalfont \textbf{(Identifiability of linear ODEs)}}
    Suppose $C = I_n$ and $E = I_n$. The descriptor system \eqref{eq.lineardescriptor.augmented} is locally identifiable at $\bm x_0\neq 0$ with respect to $\bm\theta = \operatorname{vec}(A)$.

\end{cor}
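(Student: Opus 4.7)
The strategy is to invoke Corollary \ref{cor.idflinearfullmeas} and reduce the problem to a rank condition on the parameter-derivative block $\mathcal I_{11}$. Since $C = I_n$ is assumed, Corollary \ref{cor.idflinearfullmeas} guarantees local identifiability as long as $\rank(\mathcal I_{11}(\bm\theta_0,\bm z_0)) = p = n^2$ for some $\mu$. For the full parameter set $\bm\theta = \operatorname{vec}(A)$ the remark following Corollary \ref{cor.identifiability.linear} (equation \eqref{eq.Athetax}) gives the explicit Kronecker form
\begin{equation*}
\mathcal I_{11}(\bm\theta,\bm z) = X \otimes I_n,\qquad X=\begin{bmatrix}\bm x^{\transp}\\ \dot{\bm x}^{\transp}\\ \vdots\\(\bm x^{(\mu)})^{\transp}\end{bmatrix}.
\end{equation*}
Using $\rank(X \otimes I_n) = n\cdot\rank(X)$, the target condition becomes simply $\rank(X) = n$.

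Next I would exploit the ODE structure obtained by setting $E = I_n$: the augmented dynamics become $\dot{\bm x} = A\bm x$, so along the unique trajectory through $(\bm x_0,\bm\theta_0)$ the successive time derivatives evaluate to $\bm x^{(k)}(t_0) = A^{k}\bm x_0$. Substituting these into $X$ identifies it (after transposition) with the Krylov matrix $[\bm x_0\;\; A\bm x_0\;\; A^{2}\bm x_0\;\;\ldots\;\;A^{\mu}\bm x_0]$. By the Cayley–Hamilton theorem, fixing $\mu = n-1$ already captures the entire Krylov subspace $\mathcal K(A,\bm x_0)$.

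The main step, and the expected obstacle, is to certify that $\mathcal K(A,\bm x_0) = \R^n$ for the stated $\bm x_0 \neq 0$. Strictly speaking this requires $\bm x_0$ to be a \emph{cyclic} vector of $A$, which fails on a proper algebraic subvariety of $\R^n$ (eigenvectors of $A$, for instance, generate a one-dimensional Krylov subspace). To close this gap I would invoke the state-dependent nature of the test already noted in the remark following Theorem \ref{theor.identifiability} and the "almost necessary" observation of Remark \ref{rem.almostnecessary}: the rank condition \eqref{eq.rankAtheta} depends algebraically on $(\bm x_0,\bm\theta_0)$ and is violated only on the proper subvariety of non-cyclic pairs, so it holds on an open, dense subset of $\R^n\setminus\{\bm 0\}\times\R^{n^2}$. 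Combining this genericity statement with Corollary \ref{cor.idflinearfullmeas} yields local identifiability of \eqref{eq.lineardescriptor.augmented} at $\bm x_0$ with respect to $\bm\theta_0 = \operatorname{vec}(A_0)$, which recovers the classical Delforge–Walter sufficiency result \cite{Delforge,Walter1976identifiability}.
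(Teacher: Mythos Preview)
Your route mirrors the paper's exactly: invoke Corollary~\ref{cor.idflinearfullmeas}, express $\mathcal I_{11}$ via \eqref{eq.Athetax}, and use $E=I_n$ to replace $\bm x^{(k)}$ by $A^{k}\bm x$, reducing the question to the rank of the Krylov matrix $[\,\bm x_0\;\;A\bm x_0\;\;\ldots\;\;A^{n-1}\bm x_0\,]$. The paper then writes $\rank(\mathcal I_{11})=\rank\!\big([A^{0}\;A^{1}\;\ldots\;A^{n-1}]\otimes\bm x\big)\cdot\rank(I_n)=n\cdot 1\cdot n=n^{2}$ and concludes for \emph{every} $\bm x_0\neq 0$; this step tacitly equates the rank of the Krylov matrix with $\rank\!\big([I\;A\;\ldots\;A^{n-1}]\big)\cdot\rank(\bm x)$, which is not a valid Kronecker factorization. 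Your observation that full rank actually requires $\bm x_0$ to be cyclic for $A$ is correct (an eigenvector gives a rank-one Krylov matrix), so the unrestricted ``$\bm x_0\neq 0$'' claim is too strong. Your genericity repair---that the cyclic condition fails only on a proper algebraic subvariety and therefore holds on an open dense set of $(\bm x_0,A)$---is the honest version of the argument and aligns with the classical Delforge--Walter results you cite.
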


\begin{proof}
    See Appendix \ref{app.proofs}.  
\end{proof}

\begin{figure*}
    \centering
    \includegraphics[width=0.99\linewidth]{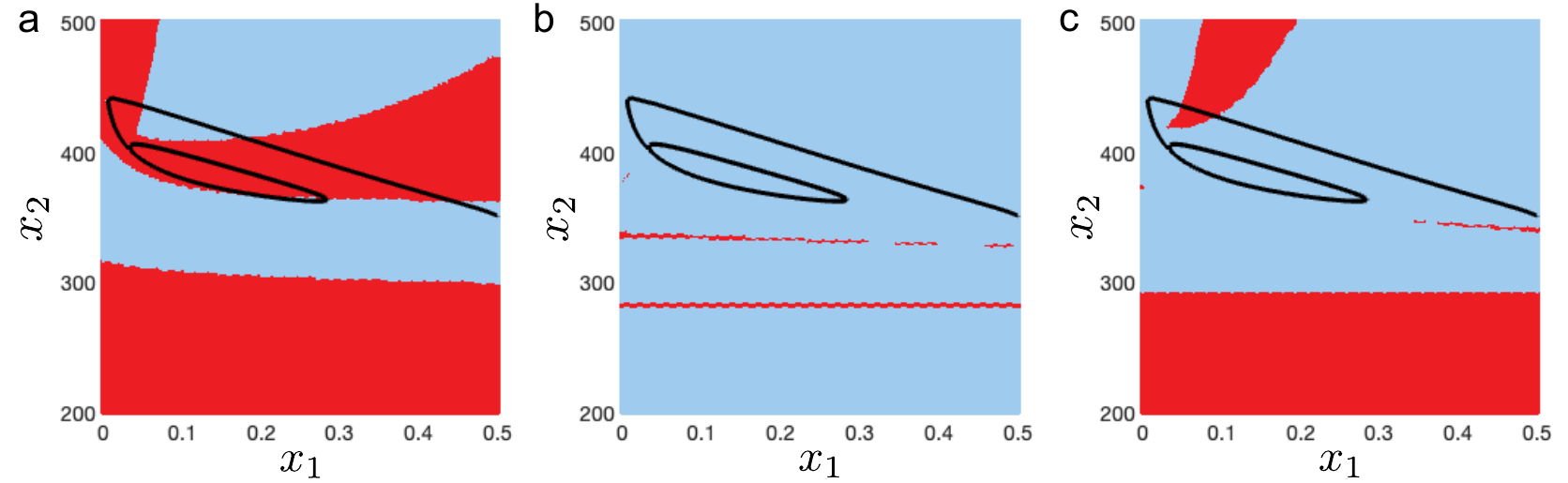}
    \caption{\label{fig.chemical}
    %
    Identifiable regions of the chemical reactor model for measurement signals given by: \textbf{(a)} $y = x_1$, \textbf{(b)} $y = x_2$, and \textbf{(c)} $y = x_3$. 
    The blue (red) colors correspond to states $\bm x$ in which the parameter $T_c$ is identifiable (unidentifiable). The black solid line represents the state trajectory $\bm x(t)$ starting at the initial condition $\bm x(t_0) = [0.5 \,\, 350 \,\, 0.4995]^\transp$ and converging to a limit cycle.
    The system parameters were set to $(c_0,T_0,T_c) = (1, 350,305)$, and $(k_1,k_2,k_3,k_4,k_5) = (1, 209.205,  2.0921, 8.7503\cdot 10^3,7.2\cdot 10^{10})$. 
    The tolerance for the numerical computation of the rank condition \eqref{eq.identifiabilitytest} is set to $\max_{\bm z} \sigma n \epsilon(\norm{\mathcal I(\theta,\bm z)}_2)$, where $\epsilon(b)$ is the floating-point relative accuracy of $b$. The numerical simulations are shown using the \texttt{ode15s} solver in MATLAB.
    }
\end{figure*}

\begin{rem} \label{rem.generic}
    Corollaries \ref{cor.idftlinear.semiexplict} and \ref{cor.idftlinear.ode} provide sufficient conditions for the \textit{local} identifiability of linear systems around any consistent state $\bm x\in\mathbb L$. The only exception is the origin $(\bm x,\dot{\bm x},\ldots,\bm x^{(\sigma)}) = 0$, where it follows trivially from Eq.~\eqref{eq.Athetax} that $[\bm x \,\, \dot{\bm x} \,\, \ddot{\bm x} \,\, \ldots \bm x^{(\mu)}]^\transp \otimes I_n = 0$ and hence condition \eqref{eq.rankAtheta} is not satisfied. This illustrates why parameter identification of a linear system is not possible when using data collected from a system operating on an equilibrium point; that is, the system dynamics were not ``excited'' during the data collection (as commonly referred in system identification).
\end{rem}


\section{\large Numerical Examples}
\label{sec.exmp}

This section demonstrates the identifiability analysis results on several descriptor systems, showing its applicability to experimental design and method validation in system identification.
Section \ref{example.chemical} presents a tutorial identifiability analysis for a simple index-1 chemical reactor model, showing step-by-step the computation of the proposed algebraic rank condition. This example highlights the parameter identifiability dependence on the choice of measured variables, also known as the sensor placement problem.
Moving up to an index-3 pendulum model, Section~\ref{example.pendulum} investigates the dependence of the parameter identifiability on the system state and experimental conditions (e.g., if the dataset is informative enough), shedding some light on the limitations of recent tools developed for parameter identification in DAEs \cite{Abdalmoaty2021}.
Finally, Section~\ref{example.linear} shows that even linear descriptor systems in which all state variables are independently measured may be unidentifiable, posing additional identification challenges compared to the special case of linear nonsingular systems.
Sections \ref{example.chemical} and \ref{example.pendulum} numerically evaluate the identifiability condition presented in Theorem \ref{theor.identifiability}, while Section \ref{example.linear} illustrates the results derived for linear systems proposed in Corollaries~\ref{cor.identifiability.linear}--\ref{cor.idftlinear.ode}.

\subsection{Chemical reactor}
\label{example.chemical}

Consider an exothermic reactor system with a single first-order reaction (\ce{A ->[$r(t)$] B}) and generated heat removed through an external cooling circuit. The chemical reactor is modeled by \cite{Ray}
\begin{subequations}%
\label{eq.chemicalreactor}%
    \begin{align}
    \dot{x}_1 &= k_1(c_0-x_1) - x_3 \label{eq.chemicalreactor.1} \\
    \dot{x}_2 &= k_1(T_0-x_2) + k_2x_3 + k_3(T_c-x_2), \\
    0 &= \rev{x_3 - k_5\exp{-{k_4}/{x_2}}x_1}, \label{eq.chemicalreactor.3}
    \end{align}
\end{subequations}
\noindent
where $\bm x = [x_1 \,\, x_2 \,\, x_3]^\transp\in\R^3$ is the state vector, $x_1$ is the concentration of reactant A, $x_2$ is the temperature, and $x_3$ is the reaction rate per unit volume (algebraic variable). Eq.~\eqref{eq.chemicalreactor.3} arises from a conservation law\footnote{This descriptor model can be directly converted to ODEs and then numerically integrated with conventional solvers (e.g., 4th-order Runge-Kutta). However, numerical solvers for DAEs enforce the constraint \eqref{eq.chemicalreactor.3} to be preserved in simulations like Fig.~\ref{fig.chemical}.} (Arrhenius equation) that imposes the dependence of reaction rates on temperature. The parameters are: inlet feed reactant concentration $c_0$ and feed temperature $T_0$, coolant temperature $T_c$, and other constants $k_i$.
Fig.~\ref{fig.chemical}a shows that the chemical reactor dynamics converge to a limit cycle for the considered set of parameters.

Let the coolant temperature be the parameter sought to be identified (i.e., $\theta=T_c$). The parameter estimation depends on the choice of observable (i.e., which state variable $x_i$ is measured by a sensor). To evaluate the appropriate sensor data for parameter estimation, we now test which observables make the system identifiable.
As a tutorial, we analyze step-by-step the identifiability of the descriptor model \eqref{eq.chemicalreactor} considering the output signal $y(t) = x_1(t)$, $t\in[0,10]$ s. First, we augment model \eqref{eq.chemicalreactor} by representing $\theta(t) = T_c(t)$ as an additional state variable with constant dynamics, i.e., $\dot\theta(t) = 0$. Functions $\bar{\bm H}_{\nu}$ and $\bar{\bm F}_{\mu}$ are then constructed up to order $\mu=\nu=n-1=2$:
$\bar{\bm H}_{\nu} = [x_1 \,\, \dot x_1 \,\, \ddot x_1 ]^\transp$ and $\bar{\bm F}_{\mu} = [
        (\bm F)^\transp \,\, (\frac{{\rm d}\bm F}{{\rm d}t})^\transp \,\, (\frac{{\rm d}^2\bm F}{{\rm d}t^2})^\transp]^\transp$, where
\begin{align*}
    \bm F &=
    \begin{bmatrix}
        k_1(c_0-x_1) - x_3 - \dot x_1 \\
        k_1(T_0-x_2) + k_2x_3 + k_3(\theta - x_2) - \dot x_2\\
        x_3 - k_5\exp{-\frac{k_4}{x_2}}x_1
    \end{bmatrix},
\\
    \frac{{\rm d}\bm F}{{\rm d}t} &=
    \begin{bmatrix}
        \rev{-k_1\dot x_1 - \dot x_3 - \ddot x_1} \\
        -k_1\dot x_2 + k_2 \dot x_3 + k_3(\dot\theta - \dot x_2) - \ddot x_2 \\
        \rev{-\dot x_3 - k_5\exp{-\frac{k_4}{x_2}}\left(\dot x_1 + k_4\frac{x_1\dot x_2}{x_2^2}\right)}
        \end{bmatrix},
\end{align*}
\noindent
and ${\rm d}^2\bm F/{\rm d}t^2$ is omitted for the sake of brevity. Note that $\bar{\bm F}_{\mu}$ and $\bar{\bm H}_{\nu}$ are functions of $\theta$, $\bm x$, and its time derivatives. By assumption, $\dot\theta = \ddot\theta = \dddot\theta = 0$, $\forall t$. After building the identifiability matrix, condition \eqref{eq.identifiabilitytest} can then be evaluated to test if the system is identifiable at any consistent state $\bm x_0\in\mathbb L\subset\R^3$ and parameter value $\theta_0\in\R^1$.

Fig.~\ref{fig.chemical} shows the identifiable regions in the state space depending on the observable choice. For the output $y(t)=x_1(t)$ (Fig.~\ref{fig.chemical}a), the limit cycle solution ${\bm \Phi}_T(\bm x(t_0);\theta)$ predominantly lies in an unidentifiable region. At unidentifiable states $\bm x_0$, measured trajectories $\bm h\circ{\bm\Phi}_T(\bm x_0;\theta)$ sufficiently close to $\bm x_0$ are \textit{not} guaranteed to be distinguishable from a trajectory $\bm h\circ{\bm\Phi}_T(\bm x_0;\theta')$ corresponding to some distinct parameter $\theta'\neq\theta$ lying in a neighborhood $\mathcal U$ of the true parameter value $\theta$. In other words, there may exist some $(\theta'\in\mathcal U)\neq\theta$ such that $\bm h\circ{\bm\Phi}_T(\bm x_0;\theta) = \bm h\circ{\bm\Phi}_T(\bm x_0;\theta')$. 
The predominance of unidentifiable regions in the state space implies the possible existence of non-unique solutions for $\theta$ in the neighborhood of ${\bm \Phi}_T(\bm x(t_0);\bm\theta)$. This directly impairs the estimation accuracy of $\theta$ if the dataset mostly include measurement data collected in these unidentifiable states. 

Despite the predominance of unidentifiable regions in Fig. \ref{fig.chemical}a, parameter identifiability is achieved periodically in a small region of the limit cycle around $\bm x(t) = [0.25 \,\, 360 \,\, 0.5]^\transp$ and other regions containing part of the transient response. 
Theoretically, parameter estimation could be possible by using data collected only at the identifiable states. However, this may not be possible in practice if the identifiable region is too small as in Fig. \ref{fig.chemical}a, either due to limited data availability in these regions or ill-conditioned identifiability matrices at states very close to unidentifiable regions. If the numerical rank tolerances are not appropriately chosen, the rank condition \eqref{eq.identifiabilitytest} may be numerically satisfied for ill-conditioned matrices $\mathcal I(\bm \theta,\bm z)$ even though the system is practically unidentifiable.

After conducting the same identifiability analysis for $y(t) = x_2(t)$ and $y(t) = x_3(t)$, Fig.~\ref{fig.chemical}b,c shows that the system is locally identifiable everywhere around the limit cycle solution when these other types of measurement signals (sensors) are considered. This is a particularly beneficial result for this sensor placement problem since temperature sensors ($y = x_2$) are very affordable and practical to install in chemical reactors.


\begin{figure*}[t]
    \centering
    \includegraphics[width=0.95\linewidth]{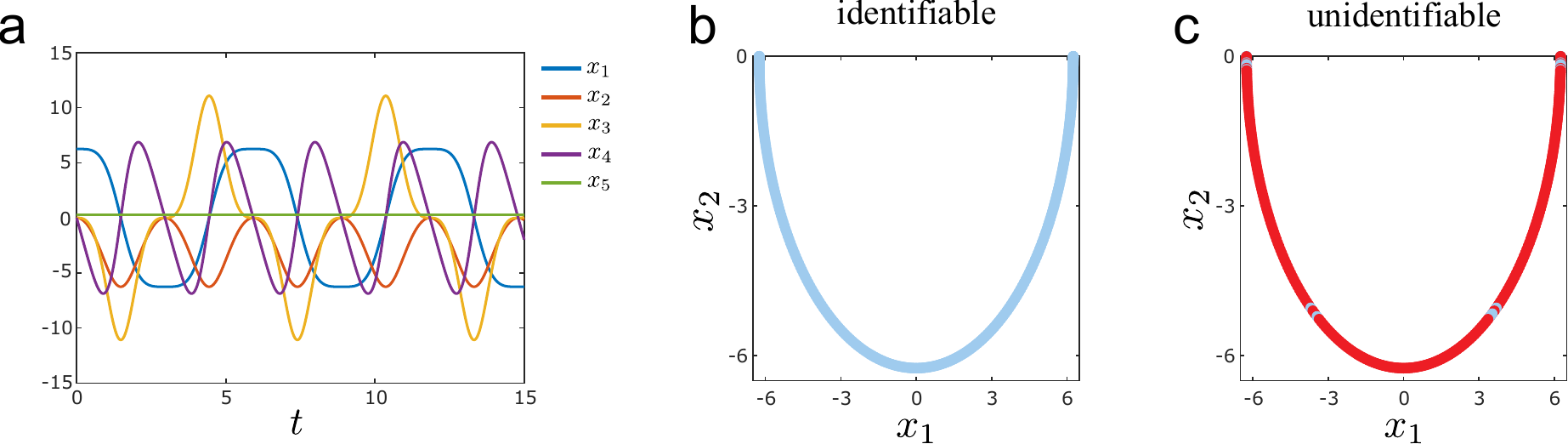}
    \caption{\textbf{(a)} State evolution of the pendulum equation as a function of time. \textbf{(b, c)} Identifiable (blue) and unidentifiable (red) regions in the phase plane $(x_1,x_2)$ depending on the parameter set $\theta$ sought to be identified: (b) parameter sets $m$, $g$, $L$, $[m \,\, g]$, $[g \,\, L]$, and $[m \,\, L]$ are all (locally) identifiable everywhere in the phase space; (c) parameter set $[m \,\, g \,\, L]$ is unidentifiable almost everywhere in the phase space. The parameters are set in the simulation as $L=6.25$, $g=9.81$, and $m=0.3$.}
    \label{fig.pendulum}
\end{figure*}

\begin{figure*}[t!]
    \centering
    \includegraphics[width=0.92\linewidth]{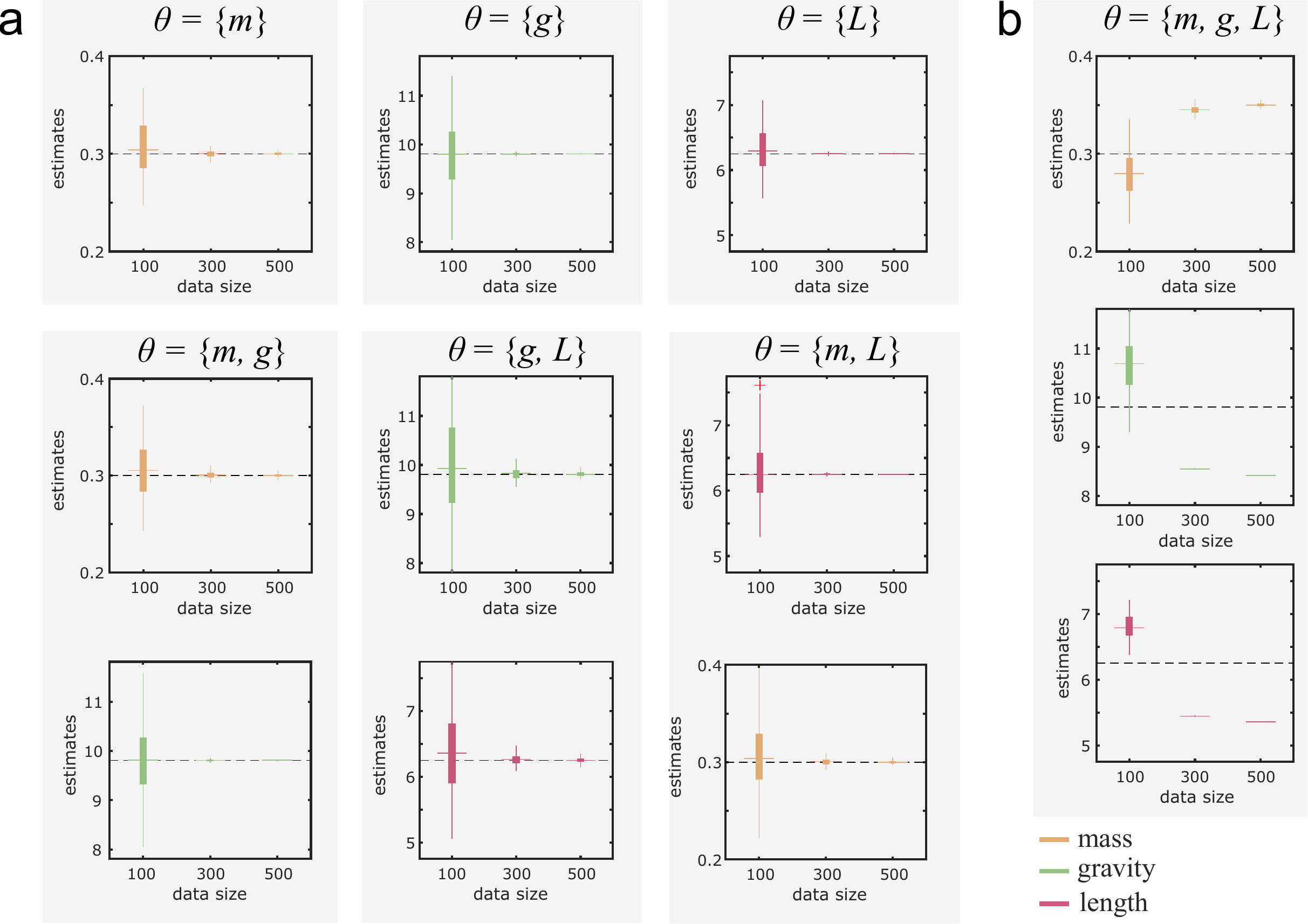}
    \caption{\label{fig.pendulum.estimation} Parameter estimation of the pendulum model using a prediction-error method for nonlinear DAE identification. Boxplots of the estimated parameter values for the sets of parameters predicted to be: \textbf{(a)} identifiable versus \textbf{(b)} unidentifiable. The x-axis shows the number of data points used for the parameter estimation. Each box comprise the estimated values for a 100 independent experiments. The dashed lines indicate the true values of the parameters. In each experiment, the mean square error between a measured noisy trajectory and a trajectory predicted by the model is minimized over the model parameters. The minimization is performed using the Levenberg-Marquardt algorithm, using the same settings as in the paper proposing the method \cite{Bereza2022}. The initial estimates for the parameters are set to $m'=0.1$, $g'=8.21$, and $L'=5.41$.
    }
\end{figure*}

\subsection{Pendulum equation}
\label{example.pendulum}

The motion of an undamped pendulum is represented in Cartesian coordinates as an index-3 DAE system:
\begin{subequations}
\label{eq.pendulum}
    \begin{align}
    \dot{x}_1 &= x_3, \label{eq.pendulum.1} \\
    \dot{x}_2 &= x_4, \\
    \dot{x}_3 &= \frac{x_1 x_5}{m}, \\
    \dot{x}_4 &= \frac{x_2 x_5}{m} - g, \label{eq.pendulum.4} \\
    0 &= x_1^2 + x_2^2 - L^2, \label{eq.pendulum.5}
    \end{align}
\end{subequations}
\noindent
where $(x_1,x_2)$ denote the Cartesian coordinates of the pendulum endpoint in the plane, $(x_3,x_4)$ are the corresponding velocities, and $x_5$ is the tension per unit length in the pendulum arm. The parameters are: mass $m$, gravitational force $g$, and pendulum-arm length $L$.
Assume that only the pendulum angle is measured, i.e.,
\begin{equation}
    y = h(\bm x) = \arctan\left(-\frac{x_1}{x_2}\right).
\label{eq.pendulum.output}
\end{equation}
\noindent
We now test the identifiability of the pendulum's individual parameters ($\theta = m$, $g$, or $L$) as well as all its possible combinations (e.g., $\bm\theta = [g \,\, L]$). The parameter identifiability is evaluated for all (consistent) states in the simulated trajectory presented in Fig.~\ref{fig.pendulum}a.
Figure~\ref{fig.pendulum}b shows that the system is locally identifiable with respect to all individual and pairwise combination of parameters for the entire set of consistent states $(x_1,x_2)\in\mathbb L$. On the other hand, the set of all parameters $\theta = [m \,\, g \,\, L]$ is unidentifiable\footnote{The identifiability condition is formally only sufficient. However, given that $\mathbb L$ is dense, it follows from Remark~\ref{rem.almostnecessary} that the condition is also necessary except for a local set of parameters with Lesbegue dimension zero.} for almost every $\bm x\in\mathbb L$ (Fig.~\ref{fig.pendulum}c), highlighting that\textemdash in practice\textemdash we expect that at least one of the parameters must be known a priori before methods for parameter estimation can be successfully applied. 

\begin{figure*}[t] 
    \centering
    \includegraphics[width=0.8\linewidth]{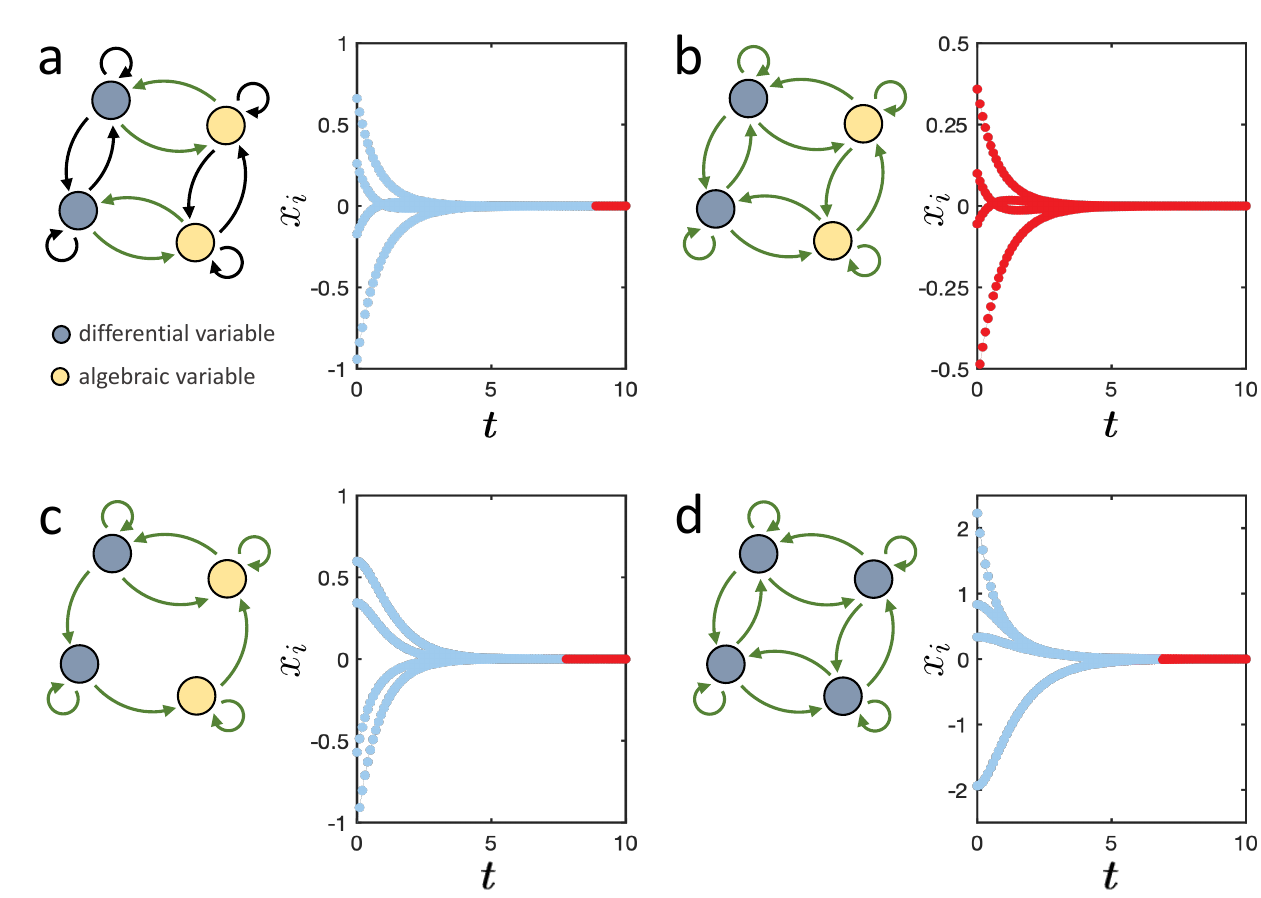}
    \caption{\label{fig.lineardae} Identifiability of linear descriptor systems for different model structures and parameter sets. 
    The network structure of the underlying model is shown on the left panels: each node represents a (differential or algebraic) variable and an edge points from node $j$ to $i$ if $A_{ij}\neq 0$. The parameters (represented by edges) sought to be identified in each case are highlighted in green.
    The identifiable (unidentifiable) states in the simulated trajectory are indicated in blue (red) on the right panels.
    The initial conditions of differential variables $\bm x_1(t_0)$ were randomly drawn from a normal distribution.
    \textbf{(a)} Identifiability of submatrices $A_{12}$ and $A_{21}$ in a dense DAE model.
    \textbf{(b)}  Identifiability of matrix $A$ in a dense DAE model.
    \textbf{(c)} Identifiability of matrix $A$ in a sparse DAE model. 
    \textbf{(d)} Identifiability of matrix $A$ in a dense ODE model.
    }

\end{figure*}

The above theoretical analysis of the parameters identifiability is supported by computational results using a prediction-error method developed for parameter estimation of general nonlinear descriptor systems \cite{Abdalmoaty2021,Bereza2022}. 
Figure \ref{fig.pendulum.estimation} shows that \textit{identifiable} parameter sets can indeed be accurately estimated using the prediction-error method; the parameter estimates converge to the true values as the number of data samples increases. On the other hand, the estimation of the parameter set $\theta=[m\,\, g \,\, L]$, which is predicted to be unidentifiable by our theory, diverged from the true values. 
Although such biased estimation in Fig. \ref{fig.pendulum.estimation}b could be mistakenly attributed to numerical issues, poor data, or lack of scalability of prediciton-error methods to larger sets of parameters, our analysis highlights that the failed parameter estimation is actually a direct consequence of the identifiability of the model structure.



\subsection{Linear DAE system}
\label{example.linear}

Consider the linear descriptor system \eqref{eq.lineardescriptor.augmented} of order $n=4$. Let $E = \operatorname{diag}(I_2,0_{2\times 2})$, leading to $n_1 = 2$ differential and $n_2 = 2$ algebraic variables. Assume that all state variables are independently measured, i.e., $C = I_4$. We choose a random system matrix such that the system is asymptotically stable:
\begin{equation*}
    A
    =
    \left[\begin{array}{cc|cc}
        -1.72 &  0.20 &  0.02 & -0.23 \\
        0.15  & -1.30 & -0.31 & -0.21 \\
        \hline
        0.51  & -0.87 & -1.50 & -0.21 \\
        -1.05 & -0.20 & -0.23 & -1.34
        \end{array}\right].
\end{equation*}

\smallskip
Figure \ref{fig.lineardae} shows the identifiability analysis of linear systems for different choices of parameters and model structures. Note that $A_{22}$ is invertible and hence the descriptor system is index 1 (Assumption \ref{assump.index1lineardae}). Since $\rank(A_{21})=2$, it follows from Corollary \ref{cor.idftlinear.semiexplict} that the set of parameters $\bm\theta_1 = \operatorname{vec}(A_{12},A_{21})$ is identifiable. As predicted by the theory, Fig. \ref{fig.lineardae}a shows that the system is indeed locally identifiable with respect to $\bm\theta_1$ at the states $\bm x(t)$ in the simulated trajectory. In these simulations, the identifiable states are determined by evaluating the rank condition \eqref{eq.identcond.linear} at state $\bm x(t)$. Only when the system converges to equilibrium ($\bm x(t)\rightarrow 0$) that identifiability is lost\footnote{Theoretically, identifiability is lost exactly at $\bm x(t) = 0$. In practice, the rank condition \eqref{eq.identcond.linear} holds only up to some numerical rank tolerance as $\bm x(t)$ approaches zero (Fig. \ref{fig.lineardae}a,c,d).}, as pointed in Remark \ref{rem.generic}. Similar results can be verified for the identifiability of $\bm\theta = \operatorname{vec}(A_{11},A_{22})$.

For the parameter set $\bm\theta_2 = \operatorname{vec}(A)$, Fig. \ref{fig.lineardae}b shows that the system is \textit{not} identifiable. This demonstrates that, for a generic descriptor system, full-state measurement is not enough for identifiability of the system matrix $A$. Prior knowledge of the system is thus essential to ascertain identifiability. One alternative is to reduce the set of parameters to be identified (as in Fig. \ref{fig.lineardae}a), but this indirectly assumes that all other parameters are already known a priori. Another approach is to explore the model sparsity and prior knowledge of zero entries in matrix $A$; the interconnection structure is often reliably known in many network systems. Consider, for example, the sparse model in which the following matrix entries are zero: $\{a_{12}, a_{13},a_{23},a_{24},a_{31},a_{34},a_{42}\}$. Figure \ref{fig.lineardae}c illustrates that identifiability of the sparse system with respect to $\bm\theta_2$ can indeed be achieved. Beyond the results presented in Corollary \ref{cor.idftlinear.semiexplict}, we expect that generic identifiability conditions based on model sparsity can be derived, although this is a topic of future research.

If $E=I_n$ (leading to an ODE system of order $4$), it follows from Corollary \ref{cor.idftlinear.ode} that the system is identifiable with respect to $\bm\theta_2$. This statement is verified numerically in Fig. \ref{fig.lineardae}d. The contrast between Figs. \ref{fig.lineardae}b and d demonstrates, even in the linear setting, DAEs are fundamentally more challenging to identify from data than ODEs.


\section{\large Conclusion}
\label{sec.conc}

The large adoption of Kalman's controllability and observability tests can be attributed to its simple formulation and easy implementation involving only matrix multiplications and prior knowledge of the system model. These tests can be intuitively examined based on the system structure \cite{Lin1974a,Boukhobza2006observability,Montanari2020} and have since been extended to more general classes of systems \cite{Hermann1977,Chen1980,Campbell1991duality,Montanari2022nonlinear}. Our results are motivated by a similar goal: to derive an identifiability condition that can be straightforwardly tested without excessive manipulation of system equations. The proposed identifiability test (Theorem~\ref{theor.identifiability}) requires only prior knowledge of the system's implicit function $\bm F(\bm x,\dot{\bm x})$, measurement function $\bm h(\bm x)$, and their successive derivatives (which can be obtained via symbolic computation). Codes for the identifiability test of general descriptor systems $\{\bm F,\bm h\}$ are available in GitHub (\url{https://github.com/montanariarthur/IdentifiabilityDAE}), along with implementations of the examples discussed in Section~\ref{sec.exmp}. We expect that the presented results will find a wide range of DAE applications beyond the examples considered in this work, such as singular systems, power systems, multi-body robotics, biological regulatory networks, and many others. In particular, by treating feedback control laws $\bm u = \bm k(\bm x)$ as algebraic equations \eqref{eq.algebraicsys}, the proposed identifiability test may also provide insight in the limitations of closed-loop identification \cite{van1998closed,forssell1999closed}.

There are still many challenges for the data-driven modeling of descriptor systems. First, it still remains unclear which types of model structures and measurement functions are sufficient to guarantee the system identifiability, even in the linear case as explored in Section \ref{example.linear}. We expect that methods based on structural network inference \cite{Gonccalves2008necessary}, graph-theoretical conditions for observability \cite{Lin1974a,Montanari2020,Montanari2023target}, or rank properties of the Kronecker product \cite{tian2005some} might be able to address these problems. Second, the identifiability analysis is grounded on the assumption that the parameters can be modeled as \textit{time-constant} state variables. However, whether the identifiability of a DAE system is promoted or inhibited by time-dependent parameters (i.e., $\dot{\bm\theta} = \bm f(t)$) has yet to be explored. Finally, our results can be directly extended to discrete-time systems and feedback systems (Remark~\ref{rem.feedbacksys}), but generalizations to broader classes of systems such as stochastic DAEs \cite{WINKLER2004435} or partial DAEs \cite{Ben14} remain an open problem.

\section*{\normalsize Acknowledgement}                          
The authors thank Mohamed R.-H. Abdalmoaty and Håkan Hjalmarsson for insightful discussions. FL was supported by the Luxembourg National Research Fund (FNR), grant CORE19/13684479/DynCell, and is now a Quilvest Research Fellow under a Quilvest donation. RB is supported by the Swedish Research Council under contracts 2019-04956 and 2016-06079 (the research environment NewLEADS) and by Digital Futures.

\appendix
\section*{\large Appendix: Proofs}

\label{app.proofs}

\textbf{PROOF OF PROPOSITION~\ref{corol.obsvnonsingular}. }
\textit{Sufficiency of condition 1.} Let $\mathcal O(\bm x,\bm w)$ be the corresponding observability matrix for a nonsingular system $\bm F(\bm x,\dot{\bm x}) = \bm f(\bm x) - \dot{\bm x}$ and measurement function $\bm h(\bm x)$. Note that $\partial\bar{\bm F}_\mu/\partial\bm w$ is a block lower triangular matrix with diagonal matrices $-I_n$. Thus, it follows from condition \eqref{eq.daeobsv} in Theorem~\ref{theor.nonlinearobsv} that the pair $\{\bm F,\bm h\}$ is observable if $\rank(\mathcal O(\bm x,\bm w)) = n + n(\mu + 1)$ or, equivalently, $\mathcal O(\bm x,\bm w)$ is invertible.

\textit{Equivalence between conditions 1 and 2.} Since $\partial\bar{\bm F}_\mu/\partial\bm w$ is invertible, then $\mathcal O(\bm x,\bm w)$ is invertible if and only if
\begin{equation*}
    \mathcal O'(\bm x,\bm w) = 
    \pdv{\bar{\bm H}_\nu}{\bm x} 
    - 
    \pdv{\bar{\bm H}_\nu}{\bm w}
    \left(
    \pdv{\bar{\bm F}_\mu}{\bm w}
    \right)^{-1}
    \pdv{\bar{\bm F}_\mu}{\bm x}
\end{equation*}
\noindent
is invertible. We now show that $\mathcal O'(\bm x,\bm w) = \bm\Psi(\bm x)$, therefore implying that conditions 1 and 2 are equivalent. We show this, without loss of generality, for the univariate case $f(x)$ and $h(x)$, where $n=1$. After some matrix manipulations, it follows that
\begin{align*}
    \mathcal O' &=
    \begin{bmatrix}
        \pdv{h}{x}
        \\
        \pdv[2]{h}{x}\dot x + \pdv{h}{x}\pdv{f}{x}
        \\
        \pdv[3]{h}{x}\dot x^2 + \pdv[2]{h}{x}\ddot x + 2\pdv[2]{h}{x}\pdv{f}{x}\dot x + \pdv{h}{x}\left(\pdv{f}{x}\right)^2 + \pdv{h}{x}\pdv[2]{f}{x}\dot x
        \\
        \vdots
    \end{bmatrix}
    \\
    & =
    \begin{bmatrix}
        \pdv{h}{x}
        \\
        \pdv{}{x}\left(\pdv{h}{x} f\right)
        \\
        \pdv{}{x}\left(\pdv[2]{h}{x} f^2 + \pdv{h}{x}\pdv{f}{x} f\right)
        \\
        \vdots
    \end{bmatrix}
    =
    \rev{\pdv{}{\bm x}}
    \begin{bmatrix}
        \mathcal L_f^0 h(x)
        \\
        \mathcal L_f^1 h(x)
        \\
        \mathcal L_f^2 h(x)
        \\
        \vdots
    \end{bmatrix}
    = \bm\Psi(x),
\end{align*}
\noindent
where we have substituted $\dot x = f(x)$ and $\ddot x = \pdv{f(x)}{x} f(x)$ and the equality holds by induction for all higher-order terms $\mathcal L_f^\nu h(x)$, $\nu\geq 3$.   \QEDS

\noindent
\textbf{PROOF OF PROPOSITION~\ref{corol.lineardescriptor}. } 
%
%
\textit{Sufficiency of statement 2.} The sufficiency of statement 2 follows directly from Theorem~\ref{theor.nonlinearobsv} by computing the observability matrix $\mathcal O(\bm x,\bm w)$ corresponding to the linear functions $\bm F(\bm x,\dot{\bm x}) = A\bm x - E\dot{\bm x}$ and $\bm h(\bm x) = C\bm x$ up to order $\sigma = n$. In the linear case, condition \eqref{eq.daeobsv} becomes globally valid since $\mathcal O(\bm x,\bm w) = O$, which has constant rank for every state $\bm x\in\mathbb L$. It follows from Lemma~\ref{lemma.1full} that if statement 2 holds, then $O$ is 1-full with respect to $\bm x$. Therefore, any arbitrary $\bm x(t_0)\in\mathbb L$ can be uniquely reconstructed from $\bm y(t)$ (and its successive derivatives up to order $n-1$) and, by Definition~\ref{def.linearobsv}, the system is R-observable. 

\textit{Necessity of statement 2.} Since statement 3 is sufficient and necessary condition for R-observability, we now prove the necessity of statement 2 by showing that statement 3 entails statement 2. It follows from statement 2 that 
\begin{align*}
    \rank(\mathcal{O}) &= 
    \rank
    \begin{bmatrix}
     A\\
     C
    \end{bmatrix}
     + \rank
    \begin{bmatrix}
        \mathcal{O}_{12}\\
        \mathcal{O}_{22}
    \end{bmatrix}\\
    &= n + \rank
    \begin{bmatrix}
        \mathcal{O}_{12}\\
        \mathcal{O}_{22}
    \end{bmatrix}.
\end{align*}
\noindent
Observe that $\rank[A^\transp \,\, C^\transp]^\transp = n$ is equivalent to statement 3 for $\lambda = 0$, which concludes the proof.  \QEDS
\noindent
\textbf{PROOF OF THEOREM~\ref{theor.identifiability}. } 
Consider the extended system \eqref{eq.implicitdescriptor.augmented} and recall that $\mu$ and $\nu$ respectively define the higher-order derivatives of functions $\bm F$ and $\bm H$ in Eq. \eqref{eq.subsequentdifferentiation}. Subsequent differentiation of Eq. \eqref{eq.implicitdescriptor.augmented} yields
\begin{equation} \label{eq.proof.FThetaH}
    \begin{bmatrix}
        \bar{\bm F}_{\mu}(\bm x,\ldots,\bm x^{(\mu + 1)},\bm\theta)
        \\
        \bm\Theta
        \\
        \bar{\bm H}_{\nu}(\bm x,\ldots,\bm x^{(\nu)},\bm\theta)
    \end{bmatrix}
    =
    \begin{bmatrix}
        0 \\ 0 \\ \bar{\bm y}
    \end{bmatrix},
\end{equation}
\noindent
where $\bm\Theta = [\dot{\bm\theta}^\transp \,\, \ddot{\bm\theta}^\transp \,\, \ldots \,\, \bm\theta^{(\mu+1) ^\transp}]^\transp$. We define the function $\bm\Psi(\bm\theta,\ldots,\bm\theta^{(\sigma)},\bm x,\ldots,\bm x^{(\sigma)}) := [\bar{\bm F}_{\mu}^\transp \,\, \bm\Theta^\transp \,\, \bar{\bm H}_{\nu}^\transp]^\transp$. The Jacobian matrix of ${\rm D}\bm\Psi$ is given by
\begin{equation*}
\begin{aligned}
    {\rm D}\bm\Psi &= 
    \begin{bmatrix}
        \dfrac{\partial \bar{\bm F}_{\mu}}{\partial \bm \theta}
        &
        \dfrac{\partial \bar{\bm F}_{\mu}}{\partial (\bm x,\dot{\bm x},\ldots,\bm x^{(\sigma)})}
        &
        \dfrac{\partial \bar{\bm F}_{\mu}}{\partial \bm (\dot{\bm\theta},\ddot{\bm\theta},\ldots,\bm\theta^{(\sigma)})}
        \\
        \dfrac{\partial\bm\Theta}{\partial\bm\theta}
        &
        \dfrac{\partial\bm\Theta}{\partial (\bm x,\dot{\bm x},\ldots,\bm x^{(\sigma)})}
        &
        \dfrac{\partial\bm\Theta}{\partial \bm (\dot{\bm\theta},\ddot{\bm\theta},\ldots,\bm\theta^{(\sigma)})}
        \\
        \dfrac{\partial \bar{\bm H}_{\nu}}{\partial \bm \theta}
        &
        \dfrac{\partial \bar{\bm H}_{\nu}}{\partial (\bm x,\dot{\bm x},\ldots,\bm x^{(\sigma)})}
        &
        \dfrac{\partial \bar{\bm H}_{\nu}}{\partial \bm (\dot{\bm\theta},\ddot{\bm\theta},\ldots,\bm\theta^{(\sigma)})}
    \end{bmatrix}
    \\
    &=
    \begin{bmatrix}
        \dfrac{\partial \bar{\bm F}_{\mu}}{\partial \bm \theta}
        &
        \dfrac{\partial \bar{\bm F}_{\mu}}{\partial \bm z}
        &
        0_{n(\mu+1)\times p\sigma}
        \\
        0_{p(\mu+1)\times p}
        &
        0_{p(\mu+1)\times n(\sigma+1)}
        &
        \Delta
        \\
        \dfrac{\partial \bar{\bm H}_{\nu}}{\partial \bm \theta}
        &
        \dfrac{\partial \bar{\bm H}_{\nu}}{\partial \bm z}
        &
        0_{q(\nu+1)\times p\sigma}
    \end{bmatrix}
\end{aligned}
\end{equation*}
\noindent
where $\Delta = [I_{p(\mu+1)\times p(\mu+1)} \,\, 0_{p(\mu+1)\times p(\sigma-\mu-1)}]$.

Now, consider the following linear map of Eq.~\eqref{eq.proof.FThetaH} evaluated locally around $(\bm\theta,\bm z,\bm w)$:
\begin{equation} \label{eq.proof.linearmap}
    {\rm D}\bm\Psi 
        \begin{bmatrix}
            \bm \theta \\
            \bm z \\
            \bm w
        \end{bmatrix}
        =
        \begin{bmatrix}
        0_{n(\mu+1)\times 1} \\ 0_{p(\mu+1)\times 1} \\ \bar{\bm y}
        \end{bmatrix},
\end{equation}
\noindent
where $\bm w = [\dot{\bm\theta}^\transp \,\, \ddot{\bm\theta}^\transp \,\, \ldots \,\, \bm\theta^{(\sigma)^\transp}]^\transp$.
By Lemma \ref{lemma.1full}, the linear system \eqref{eq.proof.linearmap} is 1-full with respect to $\bm \theta$ if and only if $\rank(\partial\bm\Psi/\partial\bm\theta) = p$ and $\operatorname{col}(\partial\bm\Psi/\partial\bm\theta)\cap\operatorname{col}(\partial\bm\Psi/\partial(\bm z,\bm w))=\emptyset$. Clearly, $\partial\bm\Psi/\partial\bm\theta$ is full column rank if and only if
\begin{equation} \label{eq.proof.M1isfullrank}
    \rank 
    \begin{bmatrix}
        \dfrac{\partial \bar{\bm F}_{\mu}}{\partial \bm \theta}
        \\
        \dfrac{\partial \bar{\bm H}_{\mu}}{\partial \bm \theta}
    \end{bmatrix} = p.
\end{equation}
\noindent
Moreover, note that $\operatorname{col}(\partial\bm\Psi/\partial\bm\theta)\cap\operatorname{col}(\partial\bm\Psi/\partial\bm w)$ is empty for any system $\{\bm F,\bm h\}$. Therefore, it follows that the 1-fullness of system \eqref{eq.proof.linearmap} with respect to $\bm \theta$ is equivalent to Eq. \eqref{eq.proof.M1isfullrank} and 
$\operatorname{col}(\partial\bm\Psi/\partial\bm\theta)\cap\operatorname{col}(\partial\bm\Psi/\partial\bm z)=\emptyset$, which in turn are equivalent to condition \eqref{eq.identifiabilitytest} according to Lemma~\ref{lemma.1full}.

Recall that $\bm\Psi : \mathcal M \mapsto \mathcal N$ is assumed to be smooth, where $\mathcal M\subset \R^{(\sigma+1)(n+p)}$ and $\mathcal N\subset \R^{(\mu+1)(n+p)+n\nu}$. Let ${\rm T}_{\bm p} \mathcal M$ denote the tangent space of $\mathcal M$ around a point $\bm p = [\bm\theta_0^\transp \,\, \bm z_0^\transp \,\, \bm w_0^\transp]^\transp \in\mathcal M$. Following the Constant Rank Theorem, there exist open neighborhoods $\mathcal U$ around $\bm p$ and $\mathcal V$ around $\bm\Psi(\bm p)$, as well as some diffeomorphisms $\bm\phi : {\rm T}_{\bm p}\mathcal M\mapsto \mathcal U$ and $\bm\psi : {\rm T}_{\bm\Psi(\bm p)} \mathcal N \mapsto \mathcal V$, such that $\bm\Psi(\mathcal U)\subseteq \mathcal V$ and the Jacobian ${\rm D}\bm\Psi : {\rm T}_{\bm p}\mathcal M \mapsto {\rm T}_{\bm\Psi(\bm p)}$ is given by ${\rm D}\bm\Psi = \bm\psi^{-1} \circ \bm \Psi \circ \bm \phi$.

The proof now follows from the fact that, if condition~\eqref{eq.identifiabilitytest} holds, then the system \eqref{eq.proof.linearmap} is 1-full with respect to $\bm\theta$. This implies that $\bm\theta$ can be uniquely determined for any consistent measurement $\bar{\bm y}$ and known matrix $D\bm\Psi$. Finally, the equivalence ${\rm D}\bm\Psi = \bm\psi^{-1} \circ \bm \Psi \circ \bm \phi$ shows that $\bm\theta$ can also be uniquely determined from $\bar{\bm y}$ and the system model $\bm\Psi$. Therefore, the system is locally identifiable around a neighborhood $\mathcal U$ of $(\bm\theta_0,\bm x_0)$ and its successive derivatives up to order $\sigma$.  \QEDS

\noindent
\textbf{PROOF OF COROLLARY~\ref{cor.idftlinear.semiexplict}. } 
Consider the general identifiability analysis where $\bm\theta = \operatorname{vec}(A_{11},A_{12},A_{21},A_{22})$. Denote 
    \begin{equation*}
    \bm\Gamma = 
        \begin{bmatrix}
                A(\bm\theta)\bm x \\
                A(\bm\theta)\dot{\bm x} \\
                \vdots \\
                A(\bm\theta)\bm x^{(\mu)}
        \end{bmatrix}
        \,\,\, \text{and} \,\,\,
    \bm\Gamma_{ij} = 
        \begin{bmatrix}
                A_{ij}(\bm\theta)\bm x_j \\
                A_{ij}(\bm\theta)\dot{\bm x}_j \\
                \vdots \\
                A_{ij}(\bm\theta)\bm x^{(\mu)}_j
        \end{bmatrix}.
    \end{equation*}
    \noindent
    As in Eq.~\eqref{eq.Athetax}, it follows that
    \begin{equation*}
    \begin{aligned}
        \mathcal I_{11} &= 
        \begin{small}
        \begin{bmatrix}
            \dfrac{\partial \bm\Gamma}{\partial \operatorname{vec}(A_{11})}
            &
            \dfrac{\partial \bm\Gamma}{\partial \operatorname{vec}(A_{12})}
            &
            \dfrac{\partial \bm\Gamma}{\partial \operatorname{vec}(A_{21})}
            &
            \dfrac{\partial \bm\Gamma}{\partial \operatorname{vec}(A_{22})}
        \end{bmatrix}
        \end{small}
        \\
        &= 
        \begin{small}
        \begin{bmatrix}
            \dfrac{\partial \bm\Gamma_{11}}{\partial \operatorname{vec}(A_{11})}
            &
            \dfrac{\partial \bm\Gamma_{12}}{\partial \operatorname{vec}(A_{12})}
            &
            0
            &
            0
            \\
            0
            &
            0
            &
            \dfrac{\partial \bm\Gamma_{21}}{\partial \operatorname{vec}(A_{21})}
            &
            \dfrac{\partial \bm\Gamma_{22}}{\partial \operatorname{vec}(A_{22})}
        \end{bmatrix}
        \end{small},
    \end{aligned}
    \end{equation*}
    \noindent
    where $\dfrac{\partial \bm\Gamma_{ij}}{\partial \operatorname{vec}(A_{ij})} = [\bm x_j \,\, \dot{\bm x}_j \,\, \ddot{\bm x}_j \,\, \ldots \bm x_j^{(\mu)}]^\transp \otimes I_{n_i}$.
    %

    For the case of $\bm\theta = \operatorname{vec}(A_{11},A_{22})$, it holds that
    \begin{subequations}
    \begin{align*}
        \rank(\mathcal I_{11}) &= \rank\left(\dfrac{\partial \bm\Gamma_{11}}{\partial \operatorname{vec}(A_{11})} \right) + \rank\left(\dfrac{\partial \bm\Gamma_{22}}{\partial \operatorname{vec}(A_{22})} \right)
        \\
        &=
        \rank\left([\bm x_1 \,\, \dot{\bm x}_1 \,\, \ddot{\bm x}_1 \,\, \ldots \bm x_1^{(\mu)}]^\transp\right)\rank(I_{n_1})
        \\
        &\,\,\,\,\,\, + \rank\left([\bm x_2 \,\, \dot{\bm x}_2 \,\, \ddot{\bm x}_2 \,\, \ldots \bm x_2^{(\mu)}]^\transp\right)\rank(I_{n_2})
        \\
        &= \rank\left([A_{\rm c}^0 \, A_{\rm c}^1 \, \ldots \, A_{\rm c}^{\mu}]\otimes\bm x_1\right)\cdot n_1
        \\
        &\,\,\,\,\,\, + \rank\left(A_{22}^{-1}A_{21}[A_{\rm c}^0 \, A_{\rm c}^1 \, \ldots \, A_{\rm c}^{\mu}]\otimes\bm x_1\right)\cdot n_2
        \\
        &=
        \rank\left([A_{\rm c}^0 \, A_{\rm c}^1 \, \ldots \, A_{\rm c}^{\mu}]\right)\cdot\rank(\bm x_1) \cdot n_1
        \\
        &\,\,\,\,\,\, + \rank\left(A_{22}^{-1}A_{21}[A_{\rm c}^0 \, \ldots \, A_{\rm c}^{\mu}]\right)\cdot \rank(\bm x_1) \cdot n_2
        \\
        &= n_1^2 + n_2^2,
    \end{align*}
    \end{subequations}
    \noindent
    where we have applied in the equalities above the relation $\rank(A\otimes B) = \rank(A) \rank(B)$.
    In the third equality, Assumption \ref{assump.index1lineardae} implies that $\dot{\bm x}_1 = A_{\rm c}\bm x_1$ and $\bm x_2 = A_{22}^{-1}A_{21}\bm x_1$. In the fourth equality, note that $\rank(\bm x_1) =1$ for all $\bm x_1\neq 0$. Moreover, since $A_{22}$ is invertible due to Assumption \ref{assump.index1lineardae} and $\rank(A_{21}) = \min\{n_2,n_1\}$ is satisfied, it follows that 
    \begin{equation*}
        \rank(A_{22}^{-1}A_{21}[A_{\rm c}^0 \,\, A_{\rm c}^1 \,\, \ldots \,\, A_{\rm c}^{n-1}]) = n_2.
    \end{equation*}
    Since $p=n_1^2 + n_2^2$ (size of $\bm\theta$), the identifiability of system \eqref{eq.lineardescriptor.semiexplcit} with respect to $\bm\theta = \operatorname{vec}(A_{11},A_{12})$ follows from Corollary~\ref{cor.idflinearfullmeas}.

    Analogously, for the case of $\bm\theta = \operatorname{vec}(A_{12},A_{21})\in\R^p$, the identifiability of \eqref{eq.lineardescriptor.semiexplcit} follows from Corollary \ref{cor.idflinearfullmeas} given that
    \begin{align*}
        \rank(\mathcal I_{11}) &= \rank\left(\dfrac{\partial \bm\Gamma_{12}}{\partial \operatorname{vec}(A_{12})} \right) + \rank\left(\dfrac{\partial \bm\Gamma_{21}}{\partial \operatorname{vec}(A_{21})} \right)
        \\
        &= \rank\left(A_{22}^{-1}A_{21}[A_{\rm c}^0 \, A_{\rm c}^1 \, \ldots \, A_{\rm c}^{\mu}]\otimes\bm x_1\right)\cdot n_1
        \\
        &\,\,\,\,\,\, + \rank\left([A_{\rm c}^0 \, A_{\rm c}^1 \, \ldots \, A_{\rm c}^{\mu}]\otimes\bm x_1\right)\cdot n_2
        \\
        &= 2n_1n_2 
        = p. \hspace{4.5cm}\QEDS
    \end{align*}

\noindent
\textbf{PROOF OF COROLLARY~\ref{cor.idftlinear.ode}.}
    Since $E = I_n$, it follows that $\bm x^{(\mu)} = A^\mu \bm x$. Thus,
    \begin{equation*} \label{eq.proof.rankI11}
    \begin{aligned}
    \rank\left(
    \mathcal I_{11}(\bm\theta,\bm z) \right)
    &=
    \rank\left( [A^0 \,\, A^1 \,\, \ldots \,\, A^{n-1}]\otimes\bm x\right)\rank(I_n) 
    \\
    &= n^2.
    \end{aligned}
    \end{equation*}
    \noindent
    As in the proof of Corollary \ref{cor.idftlinear.semiexplict}, the second equality follows from the fact that $\rank([A^0 \,\, A^1 \,\, \ldots \,\, A^{n-1}]) = n$ and $\rank(\bm x) =1 $ for $\bm x\neq 0$. Given that $p=n^2$ (size of $\bm\theta$), the identifiability of system \eqref{eq.lineardescriptor.augmented} follows from Corollary \ref{cor.idflinearfullmeas}.  \QEDS

\begin{small}

\end{small}

\end{document}